\def\imod#1{\allowbreak\mkern10mu({\operator@font mod}\,\,#1)}
\begin{document}

\markboth{S. Chen, Y. Chen, and Q. Yang}
{Randomized Testing of $q$-Monomials}

\title{Towards Randomized Testing of $q$-Monomials in Multivariate Polynomials}

\author{Shenshi Chen and Yaqing Chen}
\address{Department of Computer Science,\\
University of Texas-Pan American,\\
Edinburg, TX 78539, USA\\
\email{schen@broncs.utpa.edu}}


\author{Quanhai Yang}
\address{College of Information Engineering,\\
Northwest A\&F University,\\
Yangling, Shaanxi 712100, China\\}

\maketitle

\begin{abstract}
Given any fixed integer $q\ge 2$, a $q$-monomial is of the format
$\displaystyle x^{s_1}_{i_1}x^{s_2}_{i_2}\cdots x_{i_t}^{s_t}$ such that
$1\le s_j \le q-1$, $1\le j \le t$. $q$-monomials are natural generalizations of
multilinear monomials. Recent research on testing multilinear
monomials and $q$-monomials for prime $q$ in multivariate polynomials relies on
the property that $Z_q$ is a field when $q\ge 2 $ is prime.
When $q>2$ is not prime, it remains open whether the problem of testing $q$-monomials
can be solved in some compatible complexity. In this paper, we present
a randomized $O^*(7.15^k)$ algorithm for testing $q$-monomials of degree $k$
that are found in a multivariate polynomial that is represented by a tree-like circuit
with a polynomial size,
thus giving a positive, affirming answer to the above question.
Our algorithm works regardless of the primality of $q$ and
improves upon the time complexity of
the previously known algorithm for testing $q$-monomials for prime $q>7$.
\end{abstract}
\keywords{Algebra; complexity; multivariate polynomials; monomials; monomial testing;
 randomized algorithms.}

\section{Introduction}

\subsection{Background}

Recently, significant efforts have been made towards studying the problem of
testing monomials in multivariate polynomials \cite{koutis08,williams09,Bjorklund2010,chen12a,chen12b,chen11,chen11a,chen11b},
with the central question consisting of whether a multivariate polynomial represented by
a circuit (or even simpler structure) has a multilinear (or some specific) monomial in
its sum-product expansion. This question can be answered straightforwardly when
the input polynomial has been expanded into a sum-product representation,
but the dilemma, though, is that obtaining such a representation generally requires exponential time.
The motivation and necessity of studying the monomial testing problem can be
clearly understood from its connections to various critical problems in computational complexity
as well as the possibilities of applying algebraic properties of polynomials
to move forward the research on those critical problems (see, e.g., \cite{chen11}).

Historically, polynomials and the studies thereof have, time and again,
contributed to many advancements in
theoretical computer science research.
Most notably, many major breakthroughs in complexity theory would not have
been possible without the invaluable roles played by low degree polynomial testing/representing
and polynomial identity testing.
For example, low degree polynomial testing was involved in the proof of the PCP Theorem,
the cornerstone of the theory of computational hardness of
approximation and the culmination of a long line of research on IP
and PCP (see, Arora {\em et al.} \cite{arora98} and Feige {\em et
al.} \cite{feige96}). Polynomial identity testing has been
extensively studied due to its role in various aspects of
theoretical computer science (see, for example, Kabanets and Impagliazzo \cite{kabanets03})
and its applications in various fundamental results such as Shamir's
IP=PSPACE \cite{shamir92} and the AKS Primality Testing
\cite{aks04}. Low degree polynomial representing
\cite{minsky-papert68} has been sought after in order to prove
important results in circuit complexity, complexity class
separation and subexponential time learning of Boolean functions
(see, for examples, Beigel \cite{beigel93}, Fu\cite{fu92}, and
Klivans and Servedio \cite{klivans01}).
Other breakthroughs in the field of algorithmic design have also been achieved
by combinations of randomization and algebrization.
Randomized algebraic techniques have led to the randomized algorithms of time $O^*(2^k)$ for the {\sc $k$-path}
problem and other problems \cite{koutis08,williams09}. Another recent
seminal example is the improved randomized $O(1.657^n)$ time algorithm for the Hamiltonian
path problem by Bj{\"o}rklund~\cite{Bjorklund2010}. This algorithm provided a positive answer to
the question of whether the Hamiltonian path problem can be solved in time $O(c^n)$ for some constant
$0<c < 2$, a challenging problem that had been open for
half of a century. Bj{\"o}rklund {\em et al.} further extended the above randomized algorithm
to the $k$-path testing problem with $O^*(1.657^k)$ time complexity \cite{Bjorklund2010b}.
Very recently, those two algorithms  were simplified by
Abasi and Bshouty \cite{bshouty13}. These are just a few
examples and a survey of related literature is beyond
the scope of this paper.

\subsection{The Related Work}

The problem of testing multilinear
monomials in multivariate polynomials
was initially exploited by Koutis \cite{koutis08}
and then by Williams \cite{williams09} to design randomized
parameterized algorithms for the $k$-path problem.
Koutis \cite{koutis08} initially developed an innovative group algebra approach to
testing multilinear monomials with odd coefficients in the sum-product expansion of
any given multivariate polynomial. Williams \cite{williams09} then further connected
the polynomial identity testing problem to multilinear monomial testing and devised an algorithm
that can test multilinear monomials with odd or even coefficients.

The work by Chen {\em et al.} \cite{chen12a,chen12b,chen11,chen11a,chen11b}
aimed at developing a theory of testing monomials in
multivariate polynomials in the context of a computational complexity study.
The goal was to investigate the various complexity aspects
of the monomial testing problem and its variants.

Initially, Chen and Fu
\cite{chen11} proved a series of foundational results,
beginning with the proof that the multilinear
monomial testing problem for $\Pi\Sigma\Pi$ polynomials is
NP-hard, even when each factor of the given polynomial has at most three product terms and each
product term has a degree of at most $2$. These results have built a base upon which
further study of testing monomials can continue.

Subsequently, Chen {\em et al.} \cite{chen11b} (see, also,\cite{chen12b})
studied the generalized $q$-monomial testing problem. They proved that when $q\ge 2$ is prime,
there is a randomized $O^*(q^k)$ time algorithm
for testing $q$-monomials of degree $k$ with coefficients $\not=0\imod q$ in
an arithmetic circuit representation of a multivariate polynomial
which can then be derandomized into
a deterministic $O^*((6.4p)^k)$ time
algorithm when the underlying graph of the circuit is a tree.

In the third paper, Chen and Fu \cite{chen12a} (and \cite{chen11a})
turned to finding the coefficients
of monomials in multivariate polynomials.
Naturally, testing for the existence of any given monomial in a
polynomial can be carried out by computing the coefficient of that
monomial in the sum-product expansion of the polynomial. A zero
coefficient means that the monomial is not present in the polynomial,
whereas a nonzero coefficient implies that it is present. Moreover,
they showed that coefficients of monomials in a polynomial have their own
implications and are closely related to core problems in
computational complexity.

\subsection{Contribution and Organization}

Recent research on testing multilinear
monomials and $q$-monomials for prime $q$ in multivariate polynomials relies on
the property that $Z_2$ and  $Z_q$ are fields only when $q> 2 $ is prime.
When $q>2$ is not prime, $Z_q$ is no longer a field, hence
the group algebra based approaches in \cite{koutis08,williams09,chen11b,chen12b}
are not applicable to cases of non-prime $q$.
It remains open whether the problem of testing $q$-monomials
can be solved in some compatible complexity for non-prime $q$.
Our contribution in this paper is
a randomized $O^*(7.15^k s^2(n))$ algorithm for testing $q$-monomials of degree $k$
in a multivariate polynomial represented by a tree-like circuit of size $s(n)$,
thus giving an affirming answer to the above question.
Our algorithm works for both prime $q$ and non-prime $q$ as well.
Additionally, for prime $q>7$,
our algorithm provides us with some substantial improvement on the time complexity of
the previously known algorithm \cite{chen11b,chen12b} for testing $q$-monomials. 

The rest of the paper is organized as follows. In Section 2, we
introduce the necessary notations and definitions. In Section 3,
we examine three examples to understand the
difficulty to transform $q$-monomial testing to multilinear monomial testing.
In Section 4, we propose a new method for reconstructing a given circuit and
a technique to replace each occurrence of a variable with a randomized linear sum of
$q-1$ new variables.
We show that, with the desired probability,
the reconstruction and randomized replacements help transform
the testing of $q$-monomials in any polynomial represented by a tree-like circuit to
the testing of multilinear monomial in a new polynomial.
We design a randomized $q$-monomial testing algorithm in Section 5 and
conclude the paper in Section 6.

\section{Notations and Definitions}

 For $1\le i_1 < \cdots <i_k \le n$, $\pi =x_{i_1}^{s_1}\cdots
x_{i_t}^{s_t}$ is called a monomial. The degree of $\pi$, denoted
by $\mbox{deg}(\pi)$, is $\sum\limits^t_{j=1}s_j$. $\pi$ is multilinear,
if $s_1 = \cdots = s_t = 1$, i.e., $\pi$ is linear in all its
variables $x_{i_1}, \dots, x_{i_t}$. For any given integer $q\ge
2$, $\pi$ is called a $q$-monomial if $1\le s_1, \dots, s_t \le q-1$.
In particular, a multilinear monomial is the same as  a $2$-monomial.

An arithmetic circuit, or circuit for short, is a directed acyclic
graph consisting of $+$ gates with unbounded fan-ins, $\times$ gates with two
fan-ins, and terminal nodes that correspond to variables. The size,
denoted by $s(n)$, of a circuit with $n$ variables is the number
of gates in that circuit. A circuit is considered a tree-like circuit 
 if the fan-out of every gate is at most one, i.e.,
the underlying directed acyclic graph that excludes all the terminal nodes is a tree.
In other words, in a tree-like circuit, only the terminal nodes can have more than one fan-out
(or out-going edge).

Throughout this paper, the $O^*(\cdot)$ notation is used to
suppress $\mbox{poly}(n,k)$ factors in time complexity bounds.

By definition, any polynomial $F(x_1,\dots,x_n)$ can be expressed
as a sum of a list of monomials, called the sum-product expansion.
The degree of the polynomial is the largest degree of its
monomials in the expansion. With this expanded expression, it is trivial to
see whether $F(x_1,\dots,x_n)$ has a multilinear monomial, or a
monomial with any given pattern. Unfortunately, such an expanded expression is
essentially problematic and infeasible due to the fact that a
polynomial may often have exponentially many monomials in its
sum-product expansion.

In general, a polynomial $F(x_1,\dots,x_n)$ can be represented by
a circuit. 
 This type of representation is simple and compact and
may have a substantially smaller size polynomially in $n$,
when compared to the number of all monomials in its sum-product
expansion. Thus, the challenge then is to test whether $F(x_1,\dots,x_n)$
has a multilinear (or some other desired) monomial efficiently,
without expanding it into its sum-product representation.

For any given $n\times n$ matrix ${\cal A}$,
let $\mbox{perm}({\cal A})$ denote the
permanent of ${\cal A}$ and $\mbox{det}({\cal A})$ the
determinant of ${\cal A}$.

For any integer $k \ge 1$, we consider the group
$Z^k_2$ with the multiplication $\cdot$ defined as follows. For
$k$-dimensional column vectors $\vec{x}, \vec{y} \in Z^k_2$ with
$\vec{x} = (x_1, \ldots, x_k)^T$ and $\vec{y} = (y_1, \ldots,
y_k)^T$, $\vec{x} \cdot \vec{y} = (x_1+y_1, \ldots, x_k+y_k)^T.$
$\vec{v}_0=(0, \ldots, 0)^T$ is the zero element in the group.
For any field ${\cal F}$, the group algebra ${\cal F}[Z^k_2]$ is defined as
follows. Every element $u \in {\cal F}[Z^k_2]$ is a linear addition of the form
\begin{eqnarray}\label{exp-2}
 u &=& \sum_{\vec{x}_i\in Z^k_2,~ a_{i}\in {\cal F}} a_{i} \vec{x}_i.
\end{eqnarray}
For any element
$v = \sum\limits_{\vec{x}_i\in Z^k_2,~ b_{i}\in {\cal F}} b_{i}
\vec{x}_i$,
we define
\begin{eqnarray}
 u + v  &=& \sum_{a_{i},~ b_{i}\in {\cal F},~  \vec{x}_i\in Z^k_2}  (a_i+b_i)
 \vec{x}_i, \  \mbox{and} \nonumber\\
u \cdot v &=& \sum_{a_i,~ b_j\in {\cal F},~ \mbox{ and }~\vec{x}_i,~ \vec{y}_j\in Z^k_2}  (a_i b_j)
(\vec{x}_i\cdot \vec{y}_j). \nonumber
\end{eqnarray}
For any scalar $c \in {\cal F}$,
\begin{eqnarray}
 c u &=& c \left(\sum_{\vec{x}_i\in Z^k_p, \ a_i\in {\cal F}} a_{i} \vec{x}_i\right)
 = \sum_{\vec{x}_i\in Z^k_2,\  a_{i}\in {\cal F}} (c a_{i})\vec{x}_i. \nonumber
\end{eqnarray}
The zero element in the group algebra $\displaystyle {\cal F}[Z^k_2]$ is
$\displaystyle {\bf  0} = \sum_{\vec{v}} 0\vec{v}$, where $0$ is the zero element in ${\cal F}$
and $\vec{v}$ is any vector in $\displaystyle Z_2^k$. For example,
${\bf  0} = 0\vec{v_0} = 0\vec{v}_1 + 0\vec{v}_2 + 0\vec{v}_3$,
for any $\displaystyle \vec{v}_i \in Z^k_2$, $1\le i\le 3$.
The identity element in the group algebra $\displaystyle {\cal F}[Z^k_2]$ is
$ {\bf 1} = 1 \vec{v}_0 =  \vec{v}_0$, where $1$ is the identity element in ${\cal F}$.
For any vector $\vec{v} =(v_1, \ldots, v_k)^T \in Z_2^k$, for
$i\ge 0$, let
$\displaystyle (\vec{v})^i = (i v_1, \ldots, i v_k)^T.$
When the field ${\cal F}$ is  $Z_2$ with respect to $\imod 2$ operation, for any
$x,y\in Z_2$, $x y $ and
$x+y$ stands for  $x y \imod 2$ and $x+y \imod 2$,
respectively.
 In particular, in the group algebra $Z_2[Z_2^k]$,
for any $\vec{z}\in Z_2^k$ we have $(\vec{v})^0 = (\vec{v})^2 =
\vec{v}_0.$

\section{$q$-Monomials, Multilinear Monomials and Plus Gates} 

As we pointed out before, group algebra based algorithms \cite{koutis08,williams09,chen11b,chen12b} cannot be called upon to
test $q$-monomials when $q$ is not prime, because $Z_q$ is not a field. Hence, in such a case the algebraic foundation
for applying those algorithms is no longer available.

It seems quite hopeful that there might be a way to transform the problem of testing $q$-monomials into the problem of testing multilinear monomials and thus utilize the existing techniques for the latter problem to solve the former problem. One plausible strategy to accomplish such a transformation is to replace each
variable $x$ in a given multivariate polynomial by a sum $y_1+y_2+\cdots+y_{q-1}$ of $q-1$ new variables.
Ideally, such replacements should result in a multilinear monomial in the new polynomial that corresponds to the given $q$-monomial in the original
polynomial and vice versa, thereby allowing the multilinear monomial testing algorithm based on
some group algebra over a field of characteristic $2$~\cite{koutis08,williams09} to be adopted for the testing of multilinear monomials in the new polynomial. Unfortunately, some careful analysis will reveal that this approach has, as exhibited in Example \ref{ex-1}, a profound technical barrier that prevents us from applying those mulilinear monomial testing algorithms.

\begin{example}\label{ex-1}
Consider a simple $4$-monomial $\pi = x^3$ of degree $3$.
Replacing $x$ with $y_1+y_2+y_3$ in $\pi$ results in
\begin{eqnarray}
r(\pi) &=& (y_1+y_2+y_3)^3 \nonumber \\
&= & y_1^3+y_2^3+y_3^3 + 3y_1^2y_2 + 3y_1y^2_2 +3y_2^2y_3 + 3y_2y^2_3
 + 3y_1^2y_3 + 3y_1y^2_3 \nonumber \\
 & & +~ 6 y_1y_2y_3. \nonumber
\end{eqnarray}
\end{example}

$r(\pi)$ has one and only one degree $3$ multilinear monomial $\pi'=y_1y_2y_3$.
It is unfortunate that the coefficient $c(\pi')$ of $\pi'$ is $6$, an even number.
When applying the group algebra
based multilinear monomial testing algorithms to $r(\pi)$ over the field $Z_2$
with respect to $(\bmod~ 2)$ operation, the even coefficient $c(\pi')$ will help eliminate
$\pi'$ from $r(\pi)$. Hence, we are unable to find the existence of any multilinear monomials
in the sum-product expansion of $r(\pi)$.

Knowing that the above example can be generalized to arbitrary $q$-monomials for $q>2$,
we have to design an innovative replacement technique so that certain multilinear monomials
in the new polynomial will survive the elimination by the $(\bmod~ 2)$ operation over $Z_2$, or
by the characteristic 2 property over any field of characteristic 2.
Specifically, we have to ensure, with complete or desired probabilistic certainty,
that a given $q$-monomial $\pi$ with coefficient $c(\pi)$ in the original polynomial will correspond to
one or a list of {\em "distinguishable"} multilinear monomials with odd coefficients
in the derived polynomial, regardless of the parity of $c(\pi)$.

\begin{figure}[h]
\centering%
\begin{minipage}[b]{.4\textwidth}
    \centering%
    \includegraphics[width=.3\linewidth]{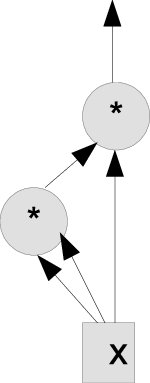}
    \captionof{figure}{A Circuit for $\pi = x^3$}
    \label{fig1}
\end{minipage}%
\begin{minipage}[b]{.6\textwidth}
    \centering%
    \includegraphics[width=.6\linewidth]{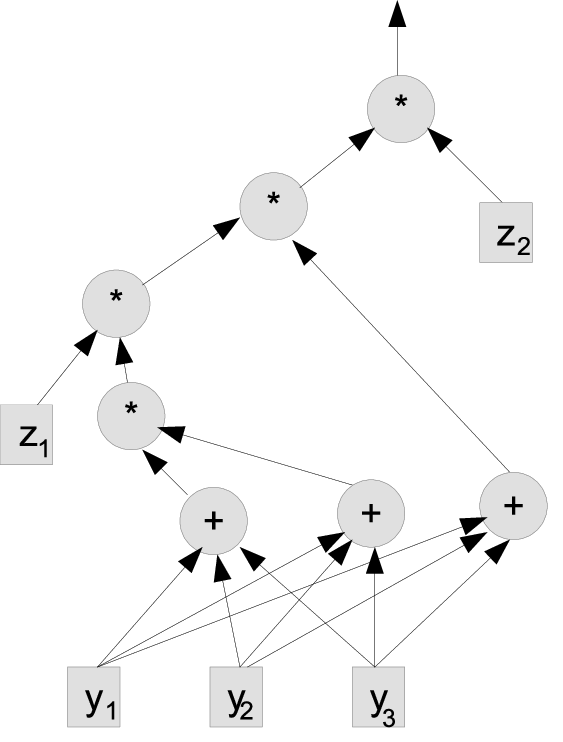}
    \captionof{figure}{The Expanded Circuit for $\pi = x^3$}
    \label{fig2}
\end{minipage}%
\end{figure}



When group algebraic elements are selected to replace variables
in the input polynomial, the polynomial might become zero due to mutual annihilation
of the results from a list of multilinear monomials with odd coefficients.
Koutis \cite{koutis08} proved that when those group algebraic elements are uniform random,
with a probability at least $\frac{1}{4}$,
the input polynomial that has multilinear monomials with odd coefficients
will not become zero, even if mutual annihilation
of the results from a list of multilinear monomials with odd coefficients may happen.

Williams \cite{williams09} introduced a new variable for each $\times$ gate in the representative circuit
for the input polynomial that can help avoid the aforementioned mutual annihilation. In essence,
the new variables added for the $\times$ gates can help generate one or a list of "distinguishable"
multilinear monomials with odd coefficients in the derived polynomial, no matter whether the coefficient of
the original multilinear monomial is even or odd.  However, this approach cannot help
resolve the $q$-monomial testing problem, due to possible implications of $+$ gates.

In order to understand the above situation,
let us examine Example 1 again. Following Williams's algorithm, we first reconstruct the circuit in Figure~\ref{fig1}.
The expanded circuit, after the replacement of $x$ by $y_1+y_2+y_3$
along with the addition of new variables $z_1$ and $z_2$
for the two respective $+$ gates, is shown in Figure~\ref{fig2}.
The coefficient for the only multilinear monomial $y_1y_2y_3$ produced by the new circuit is
$6z_1z_2$, which is even and thus helps annihilate $y_1y_2y_3$ with respect to $(\bmod~ 2)$ operation
or in general the characteristic 2 property of the underlying field.

The following two examples provide us with more evidences that there are technical difficulties in
dealing with possible implications of $+$ gates.

\begin{figure}[h]
\centering
\begin{minipage}[b]{.4\textwidth}
    \centering
    \includegraphics[width=.7\linewidth]{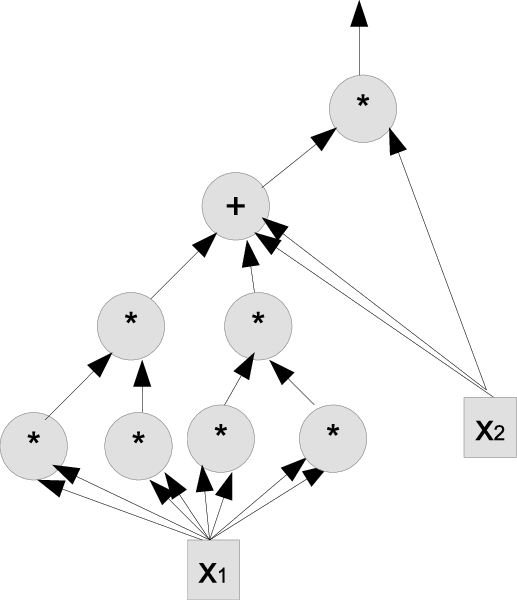}
    \captionof{figure}{A Circuit for $F(x_1,x_2)$}
    \label{fig3}
\end{minipage}%
\begin{minipage}[b]{.6\textwidth}
    \centering
    \includegraphics[width=.8\linewidth]{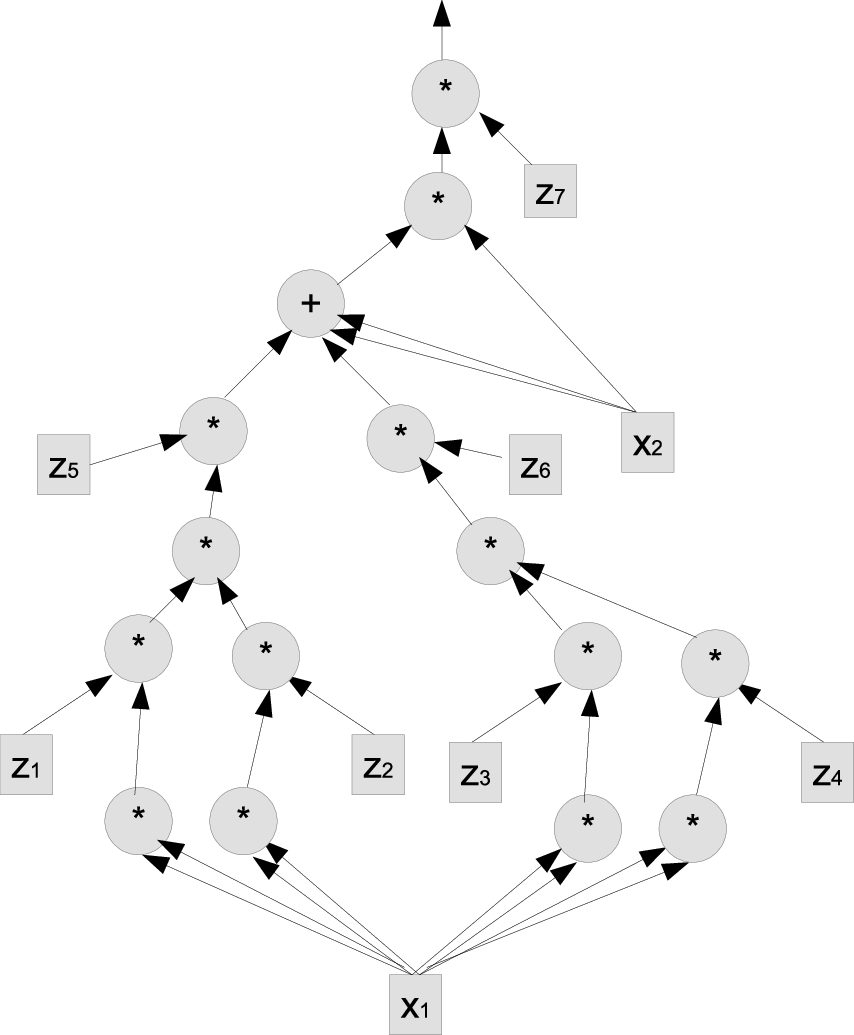}
    \captionof{figure}{The Reconstructed Circuit for $F(x_1,x_2)$}
    \label{fig4}
\end{minipage}
\end{figure}


\begin{example}\label{ex-2}
Let $F(x_1,x_2) = 2x_1^4x_2 + 2x_2^2$ as represented by the circuit in Figure~\ref{fig3}.
$F$ has one $5$-monomial $\pi_1 = x_1^4x_2$ and one $3$-monomial $\pi_2 = x_2^2$, each of which has a coefficient $2$.
\end{example}

When one follows the approach by Williams \cite{williams09} to add, for each $\times$ gate in Figure~\ref{fig3},
a new $\times$ gate that multiplies the output of this gate with a new variable, then one obtains a new circuit in Figure~\ref{fig4}
that computes
$$F'(z_1,z_2,\ldots,z_7,x_1,x_2) = z_1z_3z_5z_7x_1^4x_2 + z_3z_4z_6z_7x^4_1x_2 + 2z_7x_2^2.$$
Although $2x_1^4x_2$ in $F$
is spilt into two distinguishable occurrences that have respective unique coefficients  $z_1z_3z_5z_7$ and  $z_3z_4z_6z_7$,
yet $2x_2^2$ in $F$ corresponds to $2z_7x_2^2$  that has an even coefficient $2z_7$.

In particular, the implications of $+$ gates on testing multilinear monomials can be seen from the following example.

\begin{example}\label{ex-3}
Let $G(x_1,x_2,x_3) = 2x_1^2x_3 + 2x_2x_3$. Changing the terminal node $x_2$ to $x_3$ for the top $\times$ gate
in Figure~\ref{fig3} (respectively, for the top second $\times$ gate in Figure~\ref{fig4} gives a circuit  to compute $G$  (respectively, $G'$).
\end{example}

Like in Example 2, $G'(z_1,z_2,z_3,x_1,x_2, x_3) = z_1z_3z_5z_7x_1^4x_3 + z_3z_4z_6z_7x^4_1x_3 + 2z_7x_2x_3$.
Here, $2x_1^4x_3$ is spilt into two distinguishable occurrences that have unique coefficients   $z_1z_3z_5z_7$ and  $z_3z_4z_6z_7$, respectively.
However, the only multilinear monomial $2x_2x_3$ in $G$ corresponds to  $2z_7x_2x_3$ that has an even coefficient
$2z_7$. Therefore, this multilinear monomial cannot be detected by Williams' algorithm.

Example 3 exhibits that there is a flaw
in the circuit reconstruction by Williams \cite{williams09}: Introducing a new variable to multiply the output of every
$\times$ gate is not sufficient to overcome the difficulty that may possibly be caused by $+$ gates.

\section{Circuit Reconstruction and A Transformation}

In this section, we shall design a new method to reconstruct a given circuit and a randomized variable replacement technique
so that we can transform, with some desired success probability, the testing of $q$-monomials  to the testing of  multilinear monomials.

To simplify presentation, we assume from now on through the rest of the paper that if any given
polynomial has $q$-monomials in its sum-product expansion, then the degrees
of those multilinear monomials are at least
$k$ and one of them has exactly a degree of $k$. This assumption is feasible,
because when a polynomial has $q$-monomials of degree $< k$, e.g., the least degree of those is $\ell$ with $1\le \ell < k$,
then we can multiply
the polynomial by a list of $k-\ell$ new variables so that the resulting polynomial
will have $q$-monomials with degrees satisfying the aforementioned assumption.

\subsection{Circuit Reconstruction}\label{CR}

For any given polynomial $F(x_1,x_2,\ldots,x_n)$ represented by a tree-like circuit ${\cal C}$ of size $s(n)$,
we first reconstruct the circuit ${\cal C}$ in three steps as follows.

{\bf Eliminating redundant $+$ gates.} Starting with the root gate, check to see whether a
$+$ gate receives input from another $+$ gate. If a $+$ gate $g$ receives input from a $+$ gate $f$, which
receives inputs from gates $f_1,f_2,\ldots,f_{s}$ and/or terminal nodes $u_1,u_2,\ldots,u_t$, then delete $f$ and let the gate
$g$ to receive inputs directly from  $f_1,f_2,\ldots,f_{s}$ and/or $u_1,u_2,\ldots,u_t$. Repeat this process until
there are no more $+$ gates receiving input from another $+$ gate.

Note that we consider tree-like circuits only. Since each gate of such a circuit has at most one output,
the above eliminating process will not increase the size of  the circuit.

{\bf Duplicating terminal nodes.} For each variable $x_i$,
if $x_i$ is the input to a list of gates $g_1, g_2, \ldots, g_{\ell}$, then create $\ell$
terminal nodes $u_1, u_2, \ldots, u_{\ell}$ such that each of them represents
a copy of the variable $x_i$ and $g_j$ receives input from $u_j$, $1\le j\le \ell$.

Let ${\cal C}^*$ denote the reconstructed circuit after the above two reconstruction steps.
Since the original circuit ${\cal C}$ is tree-like, the underlying graph of ${\cal C}^*$, including all the terminal nodes,
is  a tree. Such a tree structure implies the following simple facts:
\begin{itemize}
\item There is no duplicated occurrence of any input variable along any path from
 the root to a terminal node.
 \item Every occurrence of each variable $x_i$
 in the sum-product expansion of $F$ is represented by a terminal node for $x_i$.
 \item The size of the new circuit is at most $n s(n)$.
 \item Any $+$ gate will receive input from $\times$ gates and/or terminal nodes.
\end{itemize}

\begin{figure}[h]
\centering
\includegraphics[width=.55\linewidth]{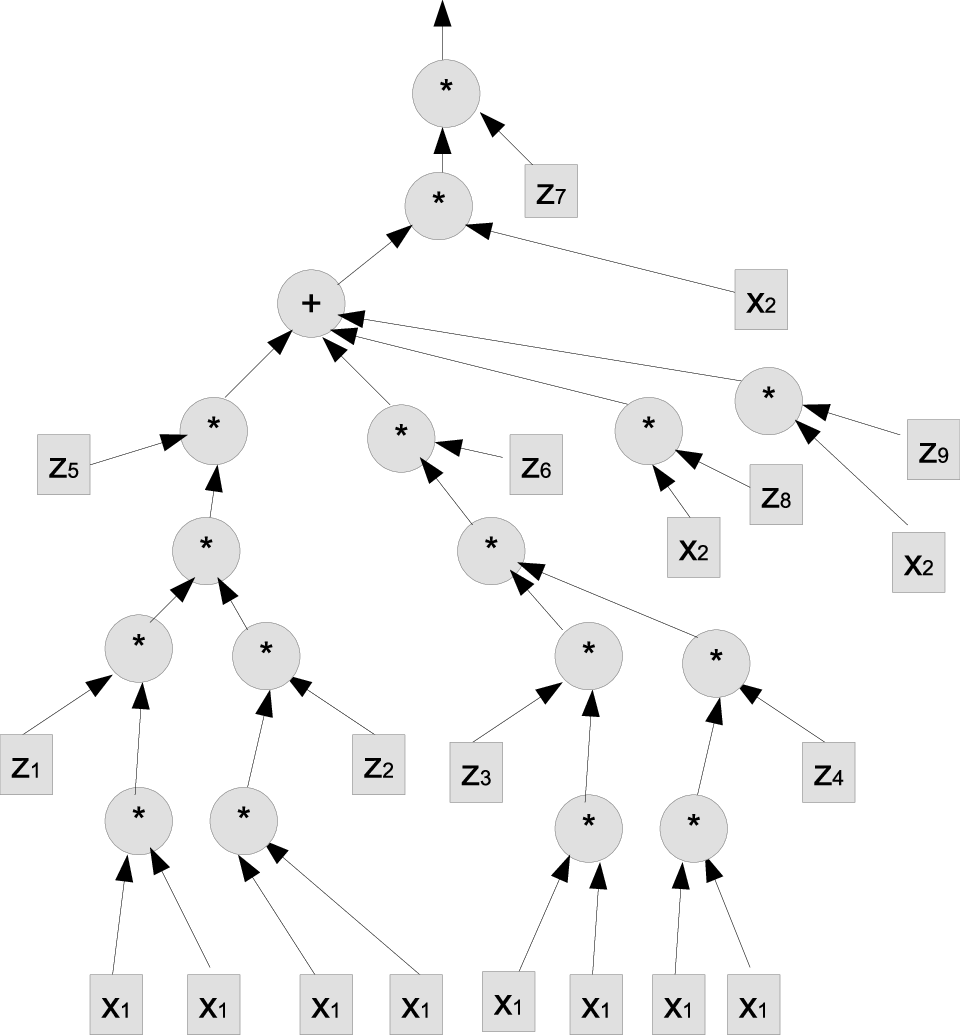}
\caption{The New Circuit for $F(x_1,x_2) = 2x_1^4x_2 + 2x_2^2$}
\label{fig5}
\end{figure}


{\bf Adding new variables for $\times$ gates and for those terminal nodes that directly connect to $+$ gates.}
Having completed the reconstruction for ${\cal C}^*$,
we then expand it to a new circuit ${\cal C'}$ as follows. For each
$\times$ gate $g_i$ in ${\cal C}^*$, we attach a new $\times$ gate $g'_i$ that
multiplies the output of $g_i$ with a new variable $z_i$, and feed
the output of $g'_i$ to the gate that reads the output of $g_i$.
Here, the way of introducing new variables for $\times$ gates follows
what is done by Williams in \cite{williams09}.
However, in addition to these new $z$-variables, we may need to introduce additional variables for $+$ gates.
Specifically, for each $+$ gate $f$ that receives inputs from terminal nodes
$u_1,u_2,\ldots,u_t$, we add a $\times$ gate $f_j$ and have it to receive inputs from $u_j$ and a new variable $z_j$
and then feed its output to $f$, $1\le j\le t$. Note that $f$ may receive input from $\times$ gates
but no new gates are needed for those gates with respect to $f$.

Assume that a list of $h$ new $z$-variables
$z_1, z_2, \ldots, z_h$ have been introduced into the circuit ${\cal C'}$.
Let $F'(z_1, z_2, \ldots, z_h, x_1, x_2,
\ldots, x_n)$ be the new polynomial represented by ${\cal C'}$.

In Figure~\ref{fig5}, we show the reconstructed circuit for the one in Figure~\ref{fig3} that represents
$F(x_1,x_2) = 2x_1^4x_2 + 2x_2^2$. By this new circuit,
$$F'(z_1,z_2,\ldots,z_9,x_1,x_2) = z_1z_2z_5z_7x_1^4x_2 + z_3z_4z_6z_7x^4_1x_2 + z_7z_8x_2^2 + z_7z_9x_2^2.$$
As expected, not only is $2x_1^4x_2$ in $F$
split into two distinguishable occurrences that have unique coefficients  $z_1z_2z_5z_7$ and  $z_3z_4z_6z_7$,
but also $2x_2^2$ in $F$ is split into two distinguishable occurrences that have unique coefficients  $z_7z_8$ and  $z_7z_9$.
Notably, those four coefficients are multilinear monomials of $z$-variables and each has an odd scalar coefficient 1.

\begin{lemma}\label{rtm-lem1}
$F(x_1,x_2,\ldots,x_n)$ has a monomial $\pi$ of degree $k$  in its sum-product expansion if and only if there is
a monomial $\alpha \pi$ in the sum-product expansion of $F'(z_1, z_2, \ldots, z_h, x_1, x_2,\ldots, x_n)$
such that $\alpha$ is a multilinear monomial of $z$-variables with degree $\le 2k-1$.
Furthermore, if $F'$ has two products $\alpha_1\pi$ and $\alpha_2\pi$ in its sum-product expansion, then
we have $\alpha_1 \not= \alpha_2$, where $\alpha_1$ and $\alpha_2$ are products of $z$-variables;
and any two different monomials of $x$-variables
in $F'$ will have different coefficients that are products of $z$-variables.
\end{lemma}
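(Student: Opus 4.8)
The plan is to track a single monomial through the three reconstruction steps and argue that each step has a transparent effect on the sum-product expansion. First I would set up notation for the reconstructed circuit $\mathcal{C}^*$: because the original circuit is tree-like, after eliminating redundant $+$ gates and duplicating terminal nodes the underlying graph (including terminal nodes) is a genuine tree, so every monomial in the sum-product expansion of $F$ corresponds to a unique subtree of $\mathcal{C}^*$ rooted at the output gate — namely, for each $+$ gate on the subtree exactly one child is selected, and each $\times$ gate keeps both children. The degree-$k$ monomial $\pi$ then corresponds to such a subtree whose $k$ leaves are terminal nodes whose variables multiply to $\pi$. I would then count how many $z$-variables get multiplied into the coefficient of $\pi$ when we pass from $\mathcal{C}^*$ to $\mathcal{C}'$: each $\times$ gate in the selected subtree contributes exactly one $z_i$ (from its attached $g_i'$), and each leaf that feeds directly into a $+$ gate contributes exactly one $z_j$ (from its attached $f_j$). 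This gives $\alpha$ as a product of $z$-variables.

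The next step is the degree bound. In a tree-like circuit realizing a monomial of degree $k$, the selected subtree has exactly $k$ leaves, hence at most $k-1$ internal $\times$ gates (a binary-ish tree with $k$ leaves has at most $k-1$ internal nodes once $+$ gates have been contracted away by Step 1, since every internal node of the monomial-subtree is either a $\times$ gate with two children or a $+$ gate with one selected child — and a leaf-to-$+$-gate edge is already counted separately). Combined with at most $k$ leaf-to-$+$-gate contributions, $\alpha$ has degree at most $(k-1)+k = 2k-1$; I would need to be slightly careful to argue the two tallies don't both hit their maxima in a way that overcounts, but the clean bound $2k-1$ should come out. Each $z_i$ appears with exponent $1$ because along any root-to-leaf path in the tree no gate is repeated, so no $z$-variable is multiplied in twice along the same branch, and distinct branches use distinct gates — this is exactly the "no duplicated occurrence along any path" fact listed after the construction of $\mathcal{C}^*$. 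Hence $\alpha$ is multilinear, with scalar coefficient equal to the original scalar coefficient of that occurrence of $\pi$, which is $1$ for a single occurrence.

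For the "furthermore" part, the key point is that the $z$-variables are a faithful encoding of the subtree. Two distinct occurrences $\alpha_1\pi$ and $\alpha_2\pi$ of the same $x$-monomial arise from two distinct selected subtrees; since the circuit is a tree, two distinct subtrees differ in at least one selected $+$-gate-child or differ in which gates they contain, and each gate carries its own private $z$-variable, so the corresponding sets of $z$-variables differ, giving $\alpha_1 \neq \alpha_2$. The same argument shows two different $x$-monomials have $z$-coefficients built from disjoint-or-different gate sets, hence different coefficients. I expect the main obstacle to be making the bijection "occurrences of $x$-monomials $\leftrightarrow$ selected subtrees $\leftrightarrow$ monomials in the $z$-variables" fully rigorous in the presence of the Step-1 contraction of nested $+$ gates (one must check contraction neither merges distinct occurrences nor changes which terminal nodes survive), and in nailing down the precise counting that yields $2k-1$ rather than a weaker bound; the multilinearity and distinctness claims then follow cleanly from the tree structure.
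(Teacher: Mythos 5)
Your proposal follows essentially the same route as the paper's proof: each monomial corresponds to a subtree of the reconstructed circuit ${\cal C}^*$, the $k-1$ many $\times$ gates plus at most $k$ terminal nodes feeding $+$ gates give the multilinear coefficient $\alpha$ of degree $\le 2k-1$, and the distinctness claims follow because the $z$-variables faithfully encode the subtree. The only place the paper is a bit more explicit is the case analysis in the ``furthermore'' part (two subtrees with the same $\times$ gates must differ at a terminal node whose parent is a $+$ gate, so the $z$-variable attached there separates the coefficients), which your observation that every leaf-to-$+$-gate connection carries its own $z$-variable already covers.
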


\begin{proof}

By the reconstruction processes, ${\cal C^*}$ computes exactly the same polynomial $F$.
If $F$ has a monomial $\pi$ of degree $k$,
then let ${\cal T}$ be the subtree of ${\cal C^*}$ that generates the monomial $\pi$,
and ${\cal T'}$ be the corresponding subtree of ${\cal T}$ in ${\cal C'}$. By the way the new $z$-variables are introduced,
the monomial generated by ${\cal T'}$ is $\alpha\pi$ with $\alpha$ as the product of all the $z$-variables added to ${\cal T}$
to yield ${\cal T'}$. Since $\pi$ has degree $k$, ${\cal T}$ has $k-1$ many $\times$ gates. So,
${\cal T'}$ has $k-1$ new $\times$ gates along with $k-1$ many new $z$-variables that are added with respect to those $\times$ gates in
${\cal T}$. In addition, ${\cal T'}$ has $k$ terminal nodes representing $k$ individual
copies of $x$-variables in $\pi$. When such a terminal node is connected to a $+$ gate, then a new $\times$ gate is added along with a new
$z$-variable. Thus, the terminal nodes in ${\cal T}'$ can contribute at most $k$ additional $z$-variables.
Therefore, the degree of $\alpha$ is at most $2k-1$. Since all those $z$-variables are distinct, $\alpha$ is multilinear.

If $F'$ has a monomial $\alpha\pi$ such that $\alpha$ is a product of $z$-variables and $\pi$ is a product of $x$-variables,
then let ${\cal M}'$ be the subtree of ${\cal C'}$ that generates $\alpha\pi$. According to the construction of ${\cal C^*}$ and ${\cal C'}$,
removing all the $z$-variables
along with the newly added $\times$ gates from ${\cal M'}$ will result in a subtree ${\cal M}$ of ${\cal C}^*$
that generates $\pi$. Thereby, $\pi$ is a monomial in $F$.

 Assume that $F'$ has $\alpha_1 \pi$ and $\alpha_2\pi$ in its sum-product expansion,
where $\alpha_1$ and $\alpha_2$ are products of $z$-variables. Let ${\cal T}'_1$ and ${\cal T}'_2$
be the two subtrees in ${\cal C'}$ that generate $\alpha_1 \pi$ and $\alpha_2\pi$, respectively.
Since each of such subtrees in ${\cal C'}$  can be used once to generate one product in the sum-product expansion
of $F'$, we have ${\cal T}'_1 \not= {\cal T}'_2.$ Let ${\cal T}_1$ and ${\cal T}_2$ be the two respective subtrees of
${\cal T}'_1$ and ${\cal T}'_2$ in ${\cal C^*}$. By the ways of circuit reconstruction and
introduction of new $z$-variables,  ${\cal T}'_1 \not= {\cal T}'_2$ implies ${\cal T}_1 \not= {\cal T}_2$.
Note that ${\cal T}_1$ and ${\cal T}_2$ generates the same $\pi$. There are two cases for ${\cal T}_1$ and ${\cal T}_2$ to differ:
either ${\cal T}_1$ and ${\cal T}_2$ differ at a $\times$ gate $g$,
or they have the same $\times$ gates but differ at a terminal node $u$.
In the former case, the $z$-variables added with respect to  $g$ will make $\alpha_1$ and $\alpha_2$ different.
In the latter case, we assume without loss of generality that ${\cal T}_1$ has a terminal node $u$ but ${\cal T}_2$ does not.
In this case, the parent node $u'$ of $u$ has to be a $+$ gate. Hence, a new $z$-variable is added for the new $\times$ gate
between $u'$ and $u$. Therefore, this new $z$-variable makes $\alpha_1$ and $\alpha_2$ different.

Now, consider that $F'$ has two monomials $\alpha\pi$ and $\beta\phi$ such that, $\pi$ and $\phi$ are products of
$x$-variables and $\alpha$ and $\beta$ are products of $z$-variables. Let ${\cal H}'_1$ and ${\cal H'}_2$
be the subtrees in ${\cal C'}$  that generate $\alpha\pi$ and $\beta\phi$, respectively.
Again, according to the construction of ${\cal C^*}$
and ${\cal C'}$, removing all the $z$-variables
along with the newly added $\times$ gates from ${\cal H'}_1$ and ${\cal H'}_2$
will result in two subtrees ${\cal H}_1$ and ${\cal H}_2$ of ${\cal C}^*$
that generate $\pi$ and $\phi$, respectively.
When $\pi \not= \phi$, ${\cal H}_1$ and ${\cal H}_2$ are different subtrees.
Following a similar analysis in the above paragraph for ${\cal T}_1$ and ${\cal T}_2$ to be different,
we have  $\alpha \not= \beta.$
Also, since the $z$-variables in $\alpha$ corresponds to $\times$ gates in ${\cal H'}_1$
that do not repeat themselves because ${\cal H'}_1$ is a tree,
$\alpha$ is multilinear. Similarly, $\beta$ is also multilinear.

Combining the above analysis completes the proof for the lemma.
\end{proof}

\subsection{A Transformation}

In order to present the technique to transform the testing of $q$-monomials  to the testing of
multilinear monomials, we introduce one more definition
related to variable replacements.

\begin{definition}\label{def-2}
Let $q\ge 2$ be a fixed integer. Let $\pi = x^s$ for $1\le s\le q-1$. Consider
\begin{eqnarray}\label{eq-0}
r(\pi) & = & \prod^{s}_{i=1}(c_{i1}y_1+c_{i2}y_2+\cdots + c_{i(q-1)}y_{q-1}), \nonumber
\end{eqnarray}
where $c_{ij}$ are constants and $y_j$ are new variables, $1\le i\le s$ and $1\le j\le q-1$.
For $1\le s \le q-1$, let $\pi' = y_1y_2\cdots y_s$.
Define the coefficient matrix of $\pi'$ with respect to $r(\pi)$ as
$$
{\cal C}[\pi', r(\pi)]=\left(\begin{array}{cccc}
c_{11} & c_{12} & \cdots & c_{1s}\\
c_{21} & c_{22} & \cdots & c_{2s}\\
        &   & \cdots &   \\
c_{s1} & c_{s2} & \cdots & c_{ss}
\end{array}
\right).
$$
\end{definition}

{\bf Transformation:}\label{Trans}
For any given $n$-variate polynomial $F(x_1,x_2,\ldots,x_n)$ represented by a circuit ${\cal C}$,
we first carry out the circuit reconstruction as addressed in Subsection \ref{CR} to obtain a new circuit ${\cal C'}$
and let $F'(z_1, z_2, \ldots, z_h, x_1, x_2,\ldots, x_n)$ be the new polynomial represented by ${\cal C'}$.
The transformation through replacing $x$-variables works as follows:
For each variable $x_i$ and for each terminal node $u_j$
representing $x_i$ in circuit ${\cal C'}$,
select uniform random values $c_{ij\ell}$ from $Z_2$ and replace $x_i$ at the node
$u_j$ with
\begin{eqnarray}\label{trans-exp1}
r(x_i) &=& (c_{ij1}y_{i1} + c_{ij2}y_{i2} +\cdots+c_{ij(q-1)}y_{i(q-1)}).
\end{eqnarray}
Let
$$
G(z_1,\ldots,z_h,y_{11},\ldots,y_{1(q-1)},\ldots,y_{n1},\ldots,y_{n(q-1)})
$$
be the polynomial resulted from the above replacements for circuit ${\cal C'}$.

We need Lemmas \ref{lem3} and \ref{lem-0} in the following to help estimate the success probability of the transformation.

Consider the vector space $Z_2^n$.
For any vector $\vec{v}_1,\vec{v}_2,\ldots,\vec{v}_{k}\in Z_2^n$, $1\le k\le n$, let
$\mbox{span}(\vec{v}_1,\vec{v}_2,\ldots,\vec{v}_{k})$ denote the linear space generated by those $k$ vectors.
The following lemma follows directly from Lemma 6.3.1 of Blum and Kannan in \cite{blum95}.

\begin{lemma}\label{lem3}
\cite{blum95} Assume that $\vec{v}_1,\vec{v}_2,\ldots,\vec{v}_{k}$ are random vectors
uniformly chosen from $Z_2^n$, $1\le k\le n$ and $n\ge 1$. Let
$\mbox{Pr}[\vec{v}_1,\vec{v}_2,\ldots,\vec{v}_{k}]$ denote the probability that
$\vec{v}_1,\vec{v}_2,\ldots,\vec{v}_{k}$ are linearly independent. We have
$$
\mbox{Pr}[\vec{v}_1,\vec{v}_2,\ldots,\vec{v}_{k}] >0.28.
$$
\end{lemma}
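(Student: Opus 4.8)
The plan is to derive this directly from the classical fact about the probability that $k$ uniformly random vectors in $Z_2^n$ are linearly independent, which is exactly Lemma~6.3.1 of Blum and Kannan~\cite{blum95}. First I would compute this probability from scratch. Choosing $\vec{v}_1,\dots,\vec{v}_k$ one at a time, the vector $\vec{v}_i$ fails to extend the independent set $\{\vec{v}_1,\dots,\vec{v}_{i-1}\}$ precisely when it lands in $\mathrm{span}(\vec{v}_1,\dots,\vec{v}_{i-1})$, a subspace of size $2^{i-1}$ out of $2^n$. Hence, conditioned on the first $i-1$ being independent, the $i$-th extends them with probability $1 - 2^{i-1}/2^n = 1 - 2^{i-1-n}$. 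Multiplying, the probability that all $k$ are linearly independent is
\begin{eqnarray}
\mathrm{Pr}[\vec{v}_1,\dots,\vec{v}_k] &=& \prod_{i=1}^{k}\left(1 - 2^{i-1-n}\right) = \prod_{j=n-k+1}^{n}\left(1 - 2^{-j}\right). \nonumber
\end{eqnarray}
Since $1\le k\le n$, every factor is of the form $1-2^{-j}$ with $j\ge 1$, and therefore the product is bounded below by the infinite product $\prod_{j=1}^{\infty}(1-2^{-j})$.

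The second step is to bound this infinite product below by $0.28$. I would observe that $\prod_{j=1}^{\infty}(1-2^{-j})$ converges (the series $\sum 2^{-j}$ converges), and estimate it by taking logarithms or by direct partial-product computation: the partial products are decreasing and one checks $(1-\tfrac12)(1-\tfrac14)(1-\tfrac18)(1-\tfrac1{16})\cdots$ stabilizes near $0.2887\ldots$, comfortably above $0.28$. A clean way to make this rigorous without numerics is to note $\prod_{j=1}^{\infty}(1-2^{-j}) = (1-\tfrac12)(1-\tfrac14)\prod_{j\ge 3}(1-2^{-j})$ and bound the tail using $\prod_{j\ge 3}(1-2^{-j}) \ge 1 - \sum_{j\ge 3}2^{-j} = 1 - \tfrac14 = \tfrac34$, giving a lower bound of $\tfrac12\cdot\tfrac34\cdot\tfrac34 = 0.28125 > 0.28$.

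The only mild subtlety — and the one place to be careful — is the edge cases in $k$: when $k=n$ the product runs over $j=1,\dots,n$ and includes the smallest factor $1-2^{-1}=\tfrac12$, which is exactly the worst case, so the bound $\prod_{j=1}^{\infty}(1-2^{-j})$ is tight in the limit $n\to\infty$, $k=n$; and one must confirm the strict inequality $>0.28$ survives for all finite $n$, which it does since the finite products are strictly larger than the infinite one. Since the statement is quoted as a consequence of~\cite{blum95}, I would keep the argument brief, citing Blum--Kannan for the product formula and only spelling out the elementary tail estimate that yields the constant $0.28$.
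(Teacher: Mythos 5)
Your proof is correct, and while it starts from the same exact product formula as the paper --- conditioning on the first $i-1$ vectors being independent to get $\mbox{Pr}[\vec{v}_1,\ldots,\vec{v}_k]=\prod_{i=1}^{k}\bigl(1-2^{i-1-n}\bigr)\ge\prod_{i=1}^{k}\bigl(1-2^{-i}\bigr)$ --- you finish the estimate by a genuinely different route. The paper splits into cases at $k=40$: for $k\le 40$ it verifies numerically that the partial products stay above $0.288$, and for $k>40$ it bounds the tail factors using $2^i>i^2$ and the telescoping identity $\prod_{i=41}^{k}\frac{(i-1)(i+1)}{i^2}=\frac{40}{41}\cdot\frac{k+1}{k}$, which still requires a numerical check of the first $40$ factors. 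You instead compare with the infinite product $\prod_{j\ge1}(1-2^{-j})$ and bound its tail by the Weierstrass-type inequality $\prod_{j\ge3}(1-2^{-j})\ge 1-\sum_{j\ge3}2^{-j}=\tfrac34$, yielding the fully analytic bound $\tfrac12\cdot\tfrac34\cdot\tfrac34=\tfrac{9}{32}=0.28125>0.28$ with no case analysis and no machine computation; the strict inequality is immediate since $0.28125>0.28$. The trade-off is that your constant $9/32$ is slightly weaker than the paper's $0.288$ for small $k$, but since only the threshold $0.28$ is used downstream (it drives the $(\frac{1}{0.28})^k$ repetition count and hence the $7.15^k$ bound), your argument proves the lemma as stated and is cleaner to verify.
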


Koutis  had a proof for $\mbox{Pr}[\vec{v}_1,\vec{v}_2,\ldots,\vec{v}_{k}] > \frac{1}{4}$,
which is contained in the proof for his Theorem 2.4 \cite{koutis08}.
But some careful examination will show that
there is a flaw in the analysis for $k=3$. Nevertheless, we present a proof in the following.

\begin{proof}
From the basis of linear algebra,
we know that $\mbox{span}(\vec{v}_1,\vec{v}_2,\ldots,\vec{v}_{k})$ has $2^k$ vectors and
any vector in $Z^n_2 - \mbox{span}(\vec{v}_1,\vec{v}_2,\ldots,\vec{v}_{k})$ is linearly independent of
$\vec{v}_1,\vec{v}_2,\ldots,\vec{v}_{k}.$ Note that $|Z^n_2| = 2^n$.
Therefore,
\begin{eqnarray}\label{eq-vec}
\mbox{Pr}[\vec{v}_1,\vec{v}_2,\ldots,\vec{v}_{k}]
&=&  \mbox{Pr}[\vec{v}_k \not\in \mbox{span}(\vec{v}_1,\vec{v}_2,\ldots,\vec{v}_{k-1})] \cdot
\mbox{Pr}[\vec{v}_1,\vec{v}_2,\ldots,\vec{v}_{k-1}] \nonumber \\
&=&(1-\frac{1}{2^{n-k +1}})\cdot \mbox{Pr}[\vec{v}_1,\vec{v}_2,\ldots,\vec{v}_{k-1}] \nonumber \\
&=& \prod^k_{i=1}(1-\frac{1}{2^{n-i+1}}) \nonumber \\
&\ge & \prod^{k}_{i=1}(1-\frac{1}{2^i}).
\end{eqnarray}
The last inequality holds because of $1\le k \le n$. For any $1\le k \le 40$, by simply carrying out
the computation for the right product of expression (\ref{eq-vec}), we obtain
\begin{eqnarray} \label{eq-vec-2}
\prod^{k}_{i=1}(1-\frac{1}{2^i}) &\ge & 0.288788 > 0.28,~~ 1\le k \le 40, \mbox{and} \\
\left(~\prod^{40}_{i=1}(1-\frac{1}{2^i})~\right) \cdot \frac{40}{41} &\ge & 0.281444 > 0.28.
\end{eqnarray}
It is obvious that $2^i > i^2$ for $i\ge 41$. Combining this with expressions (\ref{eq-vec}) and (5)
yields, for any $k>40$,
\begin{eqnarray}\label{eq-vec-3}
\prod^{k}_{i=1}(1-\frac{1}{2^i}) &=&
\left(~\prod^{40}_{i=1}(1-\frac{1}{2^i})~\right) \cdot
\left(~\prod^{k}_{i=41}(1-\frac{1}{2^i})~\right) \nonumber \\
&\ge & \left(~\prod^{40}_{i=1}(1-\frac{1}{2^i})~\right) \cdot
\left(~\prod^{k}_{i=41}(1-\frac{1}{i^2})~\right) \nonumber \\
&=& \left(~\prod^{40}_{i=1}(1-\frac{1}{2^i})~\right) \cdot
\left(~\prod^{k}_{i=41}(\frac{(i-1)(i+1)}{i^2})~\right) \nonumber \\
&=& \left(~\prod^{40}_{i=1}(1-\frac{1}{2^i})~\right) \cdot \frac{40}{41}
\cdot \frac{k+1}{k}\nonumber \\
&\ge & 0.28 \cdot \frac{k+1}{k}\nonumber \\
&\ge & 0.28
\end{eqnarray}
The complete proof is then derived from expressions (\ref{eq-vec-2}) and (\ref{eq-vec-3}).
\end{proof}

\begin{lemma}\label{lem-0}
For any integer matrix  ${\cal A} = (a_{ij})_{n\times n}$, we have
\begin{eqnarray}
\mbox{perm}({\cal A}) \imod 2 & = & \mbox{det}({\cal A}) \imod 2.
\end{eqnarray}
\end{lemma}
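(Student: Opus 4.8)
The plan is to work directly from the Leibniz-style summation formulas for the two functions over the symmetric group $S_n$ and compare them termwise modulo $2$. Recall that
\begin{eqnarray}
\mbox{perm}({\cal A}) &=& \sum_{\sigma\in S_n}\prod_{i=1}^n a_{i\sigma(i)}, \qquad
\mbox{det}({\cal A}) \;=\; \sum_{\sigma\in S_n}\mbox{sgn}(\sigma)\prod_{i=1}^n a_{i\sigma(i)}. \nonumber
\end{eqnarray}
First I would subtract one expression from the other, obtaining $\mbox{perm}({\cal A})-\mbox{det}({\cal A}) = \sum_{\sigma\in S_n}(1-\mbox{sgn}(\sigma))\prod_{i=1}^n a_{i\sigma(i)}$. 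Since $\mbox{sgn}(\sigma)\in\{+1,-1\}$, each factor $1-\mbox{sgn}(\sigma)$ equals $0$ when $\sigma$ is an even permutation and $2$ when $\sigma$ is odd; hence every term of the difference is an even integer, and therefore so is the sum.

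An equivalent and even shorter route, which I would mention as the cleaner packaging of the same idea, is to observe that $-1\equiv 1\imod 2$, so $\mbox{sgn}(\sigma)\equiv 1\imod 2$ for every $\sigma\in S_n$; then the $\sigma$-th term of $\mbox{det}({\cal A})$ is congruent modulo $2$ to the $\sigma$-th term of $\mbox{perm}({\cal A})$, and summing over all $\sigma\in S_n$ gives $\mbox{det}({\cal A})\equiv\mbox{perm}({\cal A})\imod 2$. Both arguments use only the definitions and the fact that $\mathbb{Z}$ surjects onto $Z_2$; nothing about the structure of ${\cal A}$ is needed.

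There is essentially no hard step here: the statement is a folklore identity and the only point requiring any care is the trivial arithmetic observation that the sign character collapses modulo $2$ (equivalently, that $1-\mbox{sgn}(\sigma)$ is always $0$ or $2$). If a more self-contained exposition were desired, one could alternatively prove it by induction on $n$ via cofactor expansion along the first row, using that both $\mbox{perm}$ and $\mbox{det}$ expand with coefficients $a_{1j}$ times the corresponding $(n-1)\times(n-1)$ minors and that the $\pm$ signs attached to the determinant minors are all $\equiv 1\imod 2$; but the direct termwise comparison above is the shortest and is the one I would present.
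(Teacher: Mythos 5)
Your proposal is correct and matches the paper's own argument: the paper likewise writes both quantities via the Leibniz summation over permutations and drops the sign using the observation that $b \equiv -b \imod 2$ for any integer $b$, which is exactly your "sign character collapses modulo $2$" step. Nothing further is needed.
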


\begin{proof}
Let $\lambda$ be any permutation of $\{1,2,\ldots,n\}$, and $\mbox{sign}(\lambda)$ be the sign of
the permutation $\lambda$. Since for any integer $b$, $b \equiv -b \imod 2$, we have
\begin{eqnarray}
\mbox{det}({\cal A}) \imod 2 & = &
\left(~\sum_{\lambda} (-1)^{\mbox{sign}(\lambda)}a_{1\lambda(1)}a_{2\lambda(2)}\cdots a_{n\lambda(n)}~\right) \imod 2
\nonumber \\
& = & \left(~\sum_{\lambda} a_{1\lambda(1)}a_{2\lambda(2)}\cdots a_{n\lambda(n)}~\right) \imod 2
\nonumber \\
& = & \mbox{perm}({\cal A}) \imod 2. \nonumber
\end{eqnarray}
\end{proof}

It is obvious that the above lemma can be easily extended to any field of characteristic 2.
We are now ready to estimate the success probability of the transformation.

\begin{lemma}\label{rtm-lem2}
Assume that the variable replacements are carried out over a field ${\cal F}$ of characteristic $2$ (e.g., $Z_2$).
If a given $n$-variate polynomial $F(x_1,x_2,\ldots,x_n)$ that is represented by
a tree-like circuit ${\cal C}$ has a $q$-monomial of $x$-variables with degree $k$, then, with a probability at least $0.28^k$,
$G$ has a unique multilinear monomial $\alpha \pi$ such that $\pi$ is a degree $k$ multilinear monomial of $y$-variables
and $\alpha$ is a multilinear monomial of $z$-variables with degree $\le 2k-1$. If $F$ has no $q$-monomials, then
$G$ has no multilinear monomials of $y$-variables, i.e., $G$ has no monomials of the format $\beta \phi$ such that
$\beta$ is a multilinear monomial of $z$-variables and $\phi$ is a multilinear monomial of $y$-variables.
\end{lemma}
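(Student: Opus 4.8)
The plan is to analyze the polynomial $G$ one monomial-generating subtree of ${\cal C}'$ at a time, using the reconstruction of Subsection \ref{CR} to prevent the contributions of different subtrees from interfering. I would first record three structural facts. (i) Every monomial in the sum-product expansion of $G$ is produced by some monomial-generating subtree ${\cal T}'$ of ${\cal C}'$, that is, by a choice of exactly one input at each $+$ gate (at $\times$ gates both inputs are always taken, and terminal nodes are leaves). (ii) Such a ${\cal T}'$ carries a product $\alpha({\cal T}')$ of the $z$-variables at its new $\times$ gates, every monomial coming from ${\cal T}'$ has $z$-part exactly $\alpha({\cal T}')$, and distinct subtrees carry distinct $z$-products. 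The last point is the ``furthermore'' clause of Lemma \ref{rtm-lem1}, which I would re-derive from the fact that the underlying graph of ${\cal C}'$ (terminal nodes included) is a tree: if ${\cal T}'_1\ne{\cal T}'_2$, then the topmost gate at which they differ is some $+$ gate $f$, they take different children of $f$, and the $z$-variable attached just below whichever child ${\cal T}'_1$ takes at $f$ --- the new $z$-variable of a $\times$ gate, or the $z$-variable added for a terminal node feeding $f$ --- lies in ${\cal T}'_1$ but, by the tree structure, in no part of ${\cal T}'_2$. (iii) If ${\cal T}'$ generates the $x$-monomial $x_{i_1}^{s_1}\cdots x_{i_t}^{s_t}$ in $F'$, so its copies of $x_{i_j}$ come from $s_j$ distinct terminal nodes, then after the replacement (\ref{trans-exp1}) the contribution of ${\cal T}'$ to $G$ is $\alpha({\cal T}')$ times a product over $j$ of the $s_j$ random linear forms attached to the $x_{i_j}$-block, whose coefficient vectors are independent uniform vectors in $Z_2^{q-1}$; and by Definition \ref{def-2}, within the $x_{i_j}$-block the coefficient of a multilinear monomial $y_{i_j,m_1}\cdots y_{i_j,m_{s_j}}$ with distinct $m$'s is the permanent of the $s_j\times s_j$ submatrix cut out by columns $m_1,\ldots,m_{s_j}$, which by Lemma \ref{lem-0} equals its determinant over ${\cal F}$.

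For the first assertion, given the degree-$k$ $q$-monomial $\pi=x_{i_1}^{s_1}\cdots x_{i_t}^{s_t}$ in $F$, I would use Lemma \ref{rtm-lem1} to fix a subtree ${\cal T}'$ generating $\alpha_0\pi$ in $F'$, with $\alpha_0$ a multilinear $z$-monomial of degree at most $2k-1$. Since $\pi$ is a $q$-monomial, $s_j\le q-1$, so Lemma \ref{lem3} (applied in dimension $q-1$) says the $s_j$ independent uniform coefficient vectors of the $x_{i_j}$-block are linearly independent over $Z_2$ with probability greater than $0.28$. The $t$ blocks use disjoint sets of terminal nodes, so these events are independent and hold simultaneously with probability greater than $0.28^{t}\ge 0.28^{k}$, using $t\le\sum_j s_j=k$. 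On that event each block has an $s_j\times s_j$ coefficient submatrix of determinant $1$ over ${\cal F}$; picking one such column set $m_1^{(j)},\ldots,m_{s_j}^{(j)}$ per block and forming $\pi'=\prod_{j=1}^{t}(y_{i_j,m_1^{(j)}}\cdots y_{i_j,m_{s_j}^{(j)}})$ gives a degree-$k$ multilinear monomial of $y$-variables that occurs in the contribution of ${\cal T}'$ with coefficient $\alpha_0$, the product of those determinants being $1$ over ${\cal F}$. By (ii), ${\cal T}'$ is the unique subtree with $z$-product $\alpha_0$, so nothing else contributes to $\alpha_0\pi'$; hence $\alpha_0\pi'$ occurs in $G$ with scalar coefficient $1$, and it is the asserted monomial, with $\pi'$ multilinear of degree $k$ in the $y$'s and $\alpha_0$ multilinear of degree $\le 2k-1$ in the $z$'s.

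For the second assertion I would argue by contradiction. Suppose $G$ had a monomial $\beta\phi$ with $\beta$ a product of $z$-variables and $\phi$ a multilinear monomial of $y$-variables. By (i) and (ii), $\beta=\alpha({\cal T}')$ for some subtree ${\cal T}'$, say generating the $x$-monomial $x_{i_1}^{s_1}\cdots x_{i_t}^{s_t}$ in $F'$, and $\phi$ occurs in the $y$-expansion of the contribution of ${\cal T}'$. Distinct blocks involve disjoint groups of $y$-variables, so $\phi$ factors over $j$ as a degree-$s_j$ monomial in $\{y_{i_j,1},\ldots,y_{i_j,q-1}\}$, and $\phi$ being multilinear forces each such factor to be multilinear, hence $s_j\le q-1$ for every $j$. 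Then $x_{i_1}^{s_1}\cdots x_{i_t}^{s_t}$ is a $q$-monomial, and by Lemma \ref{rtm-lem1} it is a monomial of $F$, contradicting the hypothesis. Therefore $G$ has no multilinear monomial of $y$-variables.

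The step I expect to be the main obstacle is the ``distinct subtrees carry distinct $z$-products'' part of (ii). Everything hinges on it --- that there is no cancellation in $G$, that the $z$-part of a monomial of $G$ pins down its generating subtree, and that a monomial of $F$ whose coefficient may be even nonetheless survives over a field of characteristic $2$ --- and it is precisely this requirement that forces the reconstruction of Subsection \ref{CR} to introduce $z$-variables not only at $\times$ gates but also at the terminal nodes feeding $+$ gates, the phenomenon exhibited in Examples \ref{ex-2}--\ref{ex-3}; that is the part I would write out most carefully. A secondary delicate point is the reduction of ``a multilinear $y$-monomial survives a given block'' to ``random vectors in $Z_2^{q-1}$ are linearly independent'', which goes through the permanent-equals-determinant identity of Lemma \ref{lem-0}.
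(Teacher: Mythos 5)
Your proposal is correct and follows essentially the same route as the paper: the subtree decomposition of ${\cal C}'$ with distinct $z$-products (Lemma \ref{rtm-lem1}), the pigeonhole argument for the converse direction, and the reduction of each block's surviving multilinear $y$-monomial to an invertible random $0/1$ matrix via Lemma \ref{lem-0} and Lemma \ref{lem3}, giving $0.28$ per block and $0.28^t\ge 0.28^k$ overall. The only (harmless) variation is that the paper fixes the target monomial $y_{\ell 1}\cdots y_{\ell s_\ell}$ and applies Lemma \ref{lem3} to the leading $s_\ell\times s_\ell$ submatrix, whereas you take the full coefficient vectors in $Z_2^{q-1}$ and then extract some invertible column submatrix.
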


\begin{proof}
We first show the second part of the lemma, i.e.,
if $F$ has no $q$-monomials, then $G$ has no multilinear monomials of $y$-variables.
Suppose otherwise that $G$ has a multilinear monomial $\beta \phi$.
Let $\phi = \phi_1\phi_2\cdots \phi_s$ such that $\phi_j$ is the product of all the $y$-variables
in $\phi$ that are used to replace the variable $x_{i_j}$, and let $\mbox{deg}(\phi_j) = d_j$, $1\le j\le  s$.
Consider the subtree ${\cal T}'$ of ${\cal C}'$ that generates $\beta\phi$ when the $x$-variables are replaced by
a linear sum of $y$-variables according to expression (\ref{trans-exp1}). Then, the subtree ${\cal T}$ in ${\cal C}^*$ that corresponds to
${\cal T}'$ in ${\cal C}'$ computes a monomial $\pi = x_{i_1}^{d_i}x_{i_2}^{d_2}\cdots x^{d_s}_{i_s}$ and
$\phi$ is a multilinear monomial in the expansion of the replacement $r(\pi)$, which
is obtained by replacing each occurrence of $x$-variable with a linear sum of $(q-1)$ many $y$-variables
by expression $(\ref{trans-exp1})$. If there is one $d_j$ such that
$d_j\ge q$, then let us look at the replacements for $x_{i_j}^{d_j}$, denoted as
$$
r(x_{i_j}^{d_j}) = \prod^{d_j}_{t = 1} (c_{t1}y_{1}+c_{t2}y_{2}+\cdots + c_{t(q-1)}y_{(q-1)}).
$$
Since $d_j\ge q$, by the pigeon hole principle, the expansion of the above $r(x_{i_j}^{d_j})$ has no multilinear monomials.
Thereby, we must have $1\le d_j\le q-1$, $1\le j\le s$. Hence, $\pi$ is a $q$-monomial in $F$, a contradiction to our assumption at
the beginning. Therefore, when $F$ has no $q$-monomials, then $G$ must not have any multilinear monomials of $y$-variables.

We now prove the first part of the lemma.
Suppose $F$ has a $q$-monomial $\pi = x_{i_1}^{s_1}x_{i_2}^{s_2}\cdots x_{i_t}^{s_t}$ with $1\le s_j \le q-1$, $1\le j\le t$, and
$k = \mbox{deg}(\pi)$.
By Lemma \ref{rtm-lem1}, $F'$ has at least one monomial corresponding to $\pi$. Moreover, each of such monomials
has a format $\alpha \pi$ such that $\alpha$ is a unique multilinear monomials of $z$-variables with
$\mbox{deg}(\alpha) \le 2k-1$. Let $\beta = \alpha \pi$ be one of such monomials.
Consider the subtree ${\cal T'}$ of ${\cal C}'$ that generates $\beta$. Based on the construction of
${\cal C}'$, ${\cal T'}$ has $s_j$ terminal nodes representing $s_j$ occurrences of $x_{i_j}$ in $\pi$, $1\le j\le t$.
By variable replacements in expression (\ref{trans-exp1}), $\beta$ becomes $r(\beta)$ as follows:
\begin{eqnarray}\label{rtm-exp3}
 r(\beta) &=& \alpha r(\pi) \nonumber \\
 &=& \alpha~\prod^t_{\ell=1} \left[~ \prod^{s_\ell}_{j=1}(c_{\ell j1}y_{\ell1}+c_{\ell j2}y_{\ell 2}+\cdots + c_{\ell j(q-1)}y_{\ell (q-1)})~\right],
\end{eqnarray}
where each occurrence $j$ of $x_{i_\ell}$ is replaced by $(c_{\ell j1}y_{\ell1}+c_{\ell j2}y_{\ell 2}+\cdots + c_{\ell j(q-1)}y_{\ell (q-1)}).$
For $1\le \ell \le t$, let $\pi_\ell = x^{s_\ell}_{i_\ell}$,  and
\begin{eqnarray}\label{rtm-exp4}
 r(\pi_\ell) &=&
\prod^{s_\ell}_{j=1}(c_{\ell j1}y_{\ell1}+c_{\ell j2}y_{\ell 2}+\cdots + c_{\ell j(q-1)}y_{\ell (q-1)}).
\end{eqnarray}
Since $1\le s_{\ell} \le q-1$, by expression (\ref{rtm-exp4}), $r(\pi_\ell)$ has a multilinear monomial $\pi'_\ell$ with coefficient
$c_\ell$ such that
\begin{eqnarray} \label{rtm-exp5}
\pi'_\ell & = & y_{\ell1} y_{\ell 2} \cdots y_{\ell s_{\ell}}, \mbox{ and }
\end{eqnarray}
\begin{eqnarray} \label{rtm-exp6}
c_\ell &=& \mbox{perm}(C[\pi'_\ell, r(\pi_\ell)]),
\end{eqnarray}
where the coefficient matrix, as defined in Definition \ref{def-2}, is
$$
{\cal C}[\pi'_\ell, r(\pi_\ell)]=\left(\begin{array}{cccc}
c_{\ell 11} & c_{\ell 12} & \cdots & c_{\ell 1s_\ell}\\
c_{\ell 21} & c_{\ell 22} & \cdots & c_{\ell 2s_\ell}\\
        &   & \cdots &   \\
c_{\ell s_\ell 1} & c_{\ell s_\ell 2} & \cdots & c_{\ell s_\ell s_\ell}
\end{array}
\right).
$$
Since the field ${\cal F}$ has characteristic $2$ and all the entries in the coefficient are $0/1$ values,
we have by Lemma \ref{lem-0}
$$
\mbox{perm}(C[\pi'_\ell, r(\pi_\ell)]) = \mbox{det}(C[\pi'_\ell, r(\pi_\ell)]).
$$
Because each row of $C[\pi'_\ell, r(\pi_\ell)]$ is a uniform random vector in $Z^{s_\ell}_2$,
by Lemma \ref{lem3}, with a probability of at least $0.28$, those row vectors are linearly independent, implying
$\mbox{det}(C[\pi'_\ell, r(\pi_\ell)]) =1$. Hence, by expressions (\ref{rtm-exp4}),  (\ref{rtm-exp5}) and (\ref{rtm-exp6}),
with a probability at least $0.28$, $r(\pi_\ell)$ has a multilinear monomial $\pi'_\ell$.
By expression (\ref{rtm-exp3}), with a probability at least $0.28^t\ge 0.28^k$,
$\alpha r(\pi)$ has a desired multilinear monomial $\alpha \pi'_1\pi'_2\cdots \pi'_t$.
\end{proof}

\section{Randomized Testing of $q$-monomials}

Let $d = \log_2 (2k-1) + 1$ and ${\cal F} = \mbox{GF}(2^d)$ be a finite field of $2^d$ many elements.
We consider the group algebra
${\cal F}[Z^k_2]$. Please note that the field ${\cal F} = \mbox{GF}(2^d)$ has characteristic $2$.
This implies that, for any given element $w \in {\cal F}$, adding $w$ for any even number of times yields $0$.
For example, $w + w = 2w = w+w+w+w = 4w = 0.$

The algorithm RandQMT for testing whether any given $n$-variate polynomial $F(x_1,x_2,\ldots,x_n)$ that is presented by
a tree-like circuit ${\cal C}$ has a $q$-monomial of degree $k$ is given in the following.

\begin{quote}
\noindent{\bf Algorithm RandQMT} (\underline{Rand}omized $\underline{q}$-\underline{M}onomials \underline{T}esting):
\begin{description}
\item[1.] As described in Subsection \ref{CR},
reconstruct the circuit ${\cal C}$ to obtain ${\cal C}^*$ that computes the same polynomial
$F(x_1,x_2,\ldots,x_n)$ and then introduce new $z$-variables to ${\cal C}^*$
to obtain the new circuit ${\cal C'}$ that computes $F'(z_1,z_2,\ldots, z_h,x_1,x_2,\ldots, x_n)$.

\item[2.] Repeat the following loop for at most $(\frac{1}{0.28})^k$ times.
\begin{description}
\item[2.1.] For each variable $x_i$ and for each terminal node $u_j$
representing $x_i$ in circuit ${\cal C'}$,
select uniform random values $c_{ij\ell}$ from $Z_2$ and replace $x_i$ at the node
$u_j$ with
\begin{eqnarray}\label{rtm-exp1}
\hspace{-0.3in} r(x_i) &=& (c_{ij1}y_{i1} + c_{ij2}y_{i2} +\cdots+c_{ij(q-1)}y_{i(q-1)}).
\end{eqnarray}
Let
$$G(z_1,\ldots,z_h,y_{11},\ldots,y_{1(q-1)},\ldots,y_{n1},\ldots,y_{n(q-1)})
$$
be the polynomial resulted from the above replacements for circuit ${\cal C'}$.

\item[2.2.] Select uniform random vectors $\vec{v}_{ij}
\in Z^k_2-\{\vec{v}_0\}$, and replace the variable $y_{ij}$ with
$(\vec{v}_{ij} + \vec{v}_0)$, $1\le i \le n$ and $1\le j\le q-1$.

\item[2.3.] Use ${\cal C'}$ to calculate
\begin{eqnarray}\label{rtm-exp2}
G' &=& G(z_1,\ldots,z_h,(\vec{v}_{11}+\vec{v}_0),\ldots,(\vec{v}_{1(q-1)}+\vec{v}_0),
\ldots,\nonumber \\
& & ~~~~(\vec{v}_{n1}+\vec{v}_0),\ldots,(\vec{v}_{n(q-1)}+\vec{v}_0)) \nonumber \\
& = & \sum_{j=1}^{2^k} f_j(z_1,\ldots,z_h) \cdot \vec{v}_j,
\end{eqnarray}
where each $f_j$ is a polynomial of degree $\le 2k-1$ over the finite
field ${\cal F}=\mbox{GF}(2^d)$, and $\vec{v}_j$ with $1\le j\le 2^k$ are the $2^k$ distinct vectors in
$Z^k_2$.

\item[2.4.] Perform polynomial identity testing with the
Schwartz-Zippel algorithm \cite{motwani95} for every
$f_{j}$ over ${\cal F}$.
Return {\em "yes"} if one of those polynomials is not
identical to zero.
\end{description}
\item[3.] Return {\em "no"} if no {\em "yes"} has been returned in the loop.
\end{description}
\end{quote}

It should be pointed out that the actual implementation of Step 2.3 would be running the
Schwartz-Zippel algorithm concurrently for all $f_j$, $1\le j\le 2^k$,
utilizing the circuit ${\cal C'}$. If one of those polynomials is not
identical to zero, then the output of $G'$ as computed by circuit ${\cal C'}$ is not zero.

The group algebra technique established by Koutis \cite{koutis08} assures
the following two properties:

\begin{lemma}\label{rtm-lem3}
(\cite{koutis08})~  Replacing all the variables $y_{ij}$
in $G$ with group algebraic elements $\vec{v}_{ij}+\vec{v}_0$ will make all
monomials $\alpha \pi$ in $G$ become zero, if $\pi$ is non-multilinear with respect to $y$-variables. Here,
$\alpha$ is a product of $z$-variables.
\end{lemma}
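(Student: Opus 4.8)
The statement to prove is Lemma~\ref{rtm-lem3}: if we substitute each $y$-variable $y_{ij}$ in $G$ by the group algebra element $\vec{v}_{ij}+\vec{v}_0\in {\cal F}[Z_2^k]$, then every monomial $\alpha\pi$ of $G$ in which $\pi$ is non-multilinear in the $y$-variables evaluates to the zero element ${\bf 0}$ of ${\cal F}[Z_2^k]$. The core algebraic fact driving this is the one already recorded in the excerpt: in $Z_2^k$ the group operation is coordinatewise addition mod $2$, so for any $\vec w\in Z_2^k$ we have $\vec w\cdot\vec w=\vec v_0$, and hence $(\vec w)^2=\vec v_0={\bf 1}$ in ${\cal F}[Z_2^k]$. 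First I would fix an arbitrary monomial $\alpha\pi$ of $G$ with $\pi$ non-multilinear, write $\pi = y_{i_1 j_1}^{e_1}\cdots y_{i_m j_m}^{e_m}$ with the pairs $(i_\ell,j_\ell)$ distinct and some $e_\ell\ge 2$, and observe that the $z$-variables play no role: the substitution only touches the $y$-variables, so the value of $\alpha\pi$ under the substitution is $\alpha\cdot\prod_{\ell=1}^{m}(\vec v_{i_\ell j_\ell}+\vec v_0)^{e_\ell}$, where the product is taken in the group algebra and $\alpha$ is a scalar-like factor carrying the (commuting) $z$-monomial.

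**Key steps.**
The heart of the argument is to show that $(\vec v+\vec v_0)^e={\bf 0}$ in ${\cal F}[Z_2^k]$ whenever $e\ge 2$, for any $\vec v\in Z_2^k$. I would expand by the binomial theorem in the commutative algebra ${\cal F}[Z_2^k]$:
\begin{eqnarray}
(\vec v+\vec v_0)^e &=& \sum_{r=0}^{e}\binom{e}{r}(\vec v)^r(\vec v_0)^{e-r}
= \sum_{r=0}^{e}\binom{e}{r}(\vec v)^r, \nonumber
\end{eqnarray}
using that $\vec v_0$ is the identity ${\bf 1}$. Now $(\vec v)^r$ depends only on the parity of $r$, since $(\vec v)^2=\vec v_0={\bf 1}$: it equals $\vec v_0$ for even $r$ and $\vec v$ for odd $r$. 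Grouping terms,
\begin{eqnarray}
(\vec v+\vec v_0)^e &=& \left(\sum_{r\ \mathrm{even}}\binom{e}{r}\right)\vec v_0
 + \left(\sum_{r\ \mathrm{odd}}\binom{e}{r}\right)\vec v
 = 2^{e-1}\vec v_0 + 2^{e-1}\vec v, \nonumber
\end{eqnarray}
and since $e\ge 2$ the integer coefficient $2^{e-1}$ is even, so over a field of characteristic $2$ both terms vanish and $(\vec v+\vec v_0)^e={\bf 0}$. Then, because the group algebra is commutative and ${\bf 0}$ absorbs under multiplication, the single vanishing factor $(\vec v_{i_\ell j_\ell}+\vec v_0)^{e_\ell}={\bf 0}$ (for the index $\ell$ with $e_\ell\ge 2$, which exists since $\pi$ is non-multilinear) forces the whole product $\alpha\cdot\prod_\ell(\vec v_{i_\ell j_\ell}+\vec v_0)^{e_\ell}$ to be ${\bf 0}$, regardless of the $z$-monomial $\alpha$ and of the other factors. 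Summing over all monomials of $G$, every non-multilinear-in-$y$ contribution drops out.

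**Main obstacle.**
There is no serious obstacle here; the lemma is essentially Koutis's observation transplanted to our two-layer setting. The one point that needs a careful sentence rather than a wave of the hand is the bookkeeping that the $z$-variables do not interfere: since the $z$-variables are never substituted and commute with everything, a monomial $\alpha\pi$ of $G'$ factors cleanly as $\alpha$ times the group-algebra product coming from $\pi$, and ${\bf 0}$ in ${\cal F}[Z_2^k]$ times any polynomial expression in the $z$-variables is still ${\bf 0}$; I would also note that distinct monomials of $G$ substitute independently, so no cancellation among surviving (multilinear) terms is needed for this direction. A secondary cosmetic point is that Step~2.2 of the algorithm draws $\vec v_{ij}$ from $Z_2^k\setminus\{\vec v_0\}$; this restriction is irrelevant to the present lemma (the computation $(\vec v+\vec v_0)^e={\bf 0}$ holds for every $\vec v\in Z_2^k$, including $\vec v_0$ itself), so I would simply remark that the conclusion is unconditional in the choice of the random vectors.
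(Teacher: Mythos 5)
Your proposal is correct and follows essentially the same route as the paper: the paper's proof simply computes $(\vec{v}+\vec{v}_0)^2 = 2\cdot\vec{v}_0 + 2\cdot\vec{v} = {\bf 0}$ in the characteristic-$2$ group algebra ${\cal F}[Z_2^k]$ and notes the lemma follows directly, which is the same key fact you establish (your binomial expansion for general $e\ge 2$ and the explicit remarks about the $z$-variables and absorption by ${\bf 0}$ are just a more detailed write-up of the same argument).
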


\begin{proof}
Recall that ${\cal F}$ has characteristic $2$.
For any $\vec{v} \in Z^k_2$, in the group algebra ${\cal F}[Z^k_2]$,
\begin{eqnarray}\label{rt-1}
(\vec{v}+\vec{v}_0)^2 &=& \vec{v}\cdot\vec{v} + 2\cdot\vec{v}\cdot\vec{v}_0 + \vec{v}_0\cdot\vec{v}_0 \nonumber \\
 &=&\vec{v}_0+ 2\cdot\vec{v} + \vec{v}_0 \nonumber \\
 &=& 2\cdot \vec{v}_0 + 2\cdot\vec{v} = {\bf 0}.
\end{eqnarray}
Thus, the lemma follows directly from expression (\ref{rt-1}).
\end{proof}

\begin{lemma}\label{rtm-lem4}
(\cite{koutis08})~ Replacing all the variables $y_{ij}$ in $G$
with group algebraic elements $\vec{v}_{ij}+\vec{v}_0$ will make any monomial $\alpha \pi$
to become zero,  if and only if  the vectors $\vec{v}_{ij}$  are linearly dependent in the vector space $Z^k_2$.
Here, $\pi$ is a multilinear monomial of $y$-variables and $\alpha$ is a product of $z$-variables.
Moreover, when $\pi$ becomes non-zero after the replacements,
it will become the sum of all the vectors in the linear space spanned by those vectors.
\end{lemma}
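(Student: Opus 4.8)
The plan is to expand the image of $\pi$ under the substitution directly in the group algebra and read off coefficients modulo $2$. Write $\pi = y_{i_1j_1}y_{i_2j_2}\cdots y_{i_mj_m}$ with $m=\mbox{deg}(\pi)$; multilinearity means the $m$ factors are pairwise distinct $y$-variables. Replacing each $y_{i_\ell j_\ell}$ by $\vec{v}_{i_\ell j_\ell}+\vec{v}_0$ turns $\pi$ into $\prod_{\ell=1}^{m}(\vec{v}_{i_\ell j_\ell}+\vec{v}_0)$. Since the group multiplication in $Z_2^k$ is coordinatewise addition mod $2$ and $\vec{v}_0$ is the identity, choosing from each factor either $\vec{v}_{i_\ell j_\ell}$ or $\vec{v}_0$ and multiplying yields, for a subset $S\subseteq\{1,\ldots,m\}$, the vector $\vec{w}_S := \sum_{\ell\in S}\vec{v}_{i_\ell j_\ell}$, so
\begin{eqnarray}
\prod_{\ell=1}^{m}(\vec{v}_{i_\ell j_\ell}+\vec{v}_0) &=& \sum_{S\subseteq\{1,\ldots,m\}}\vec{w}_S. \nonumber
\end{eqnarray}
Because $\alpha$ is a nonzero product of the untouched $z$-variables, $\alpha\pi$ becomes $\mathbf{0}$ exactly when this sum is $\mathbf{0}$, i.e. when every vector $\vec{w}\in Z_2^k$ equals $\vec{w}_S$ for an even number of subsets $S$.

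First I would handle the linearly independent case, which also yields the ``moreover'' claim. If $\vec{v}_{i_1j_1},\ldots,\vec{v}_{i_mj_m}$ are independent, then $\vec{w}_S=\vec{w}_{S'}$ forces $\sum_{\ell\in S\triangle S'}\vec{v}_{i_\ell j_\ell}=\vec{v}_0$ and hence $S=S'$; thus the $2^m$ subsets yield $2^m$ distinct vectors, which are precisely the $2^m$ elements of $\mbox{span}(\vec{v}_{i_1j_1},\ldots,\vec{v}_{i_mj_m})$. Hence $\pi$ becomes the sum of all vectors in that span, each with coefficient $1$, so $\pi$ and therefore $\alpha\pi$ is nonzero.

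Next I would treat the linearly dependent case by a fixed-point-free involution. If the vectors are dependent, pick a nonempty $T\subseteq\{1,\ldots,m\}$ with $\vec{w}_T=\vec{v}_0$ (a $Z_2$-dependence is exactly a nonempty subset summing to zero). The map $S\mapsto S\triangle T$ is an involution on the subsets of $\{1,\ldots,m\}$ with no fixed point (since $T\neq\emptyset$), and it preserves contributions because $\vec{w}_{S\triangle T}=\vec{w}_S\cdot\vec{w}_T=\vec{w}_S$. Hence it pairs up the $2^m$ subsets so that $\vec{w}_S$ takes each value an even number of times; over a field of characteristic $2$ all coefficients cancel and $\pi$, hence $\alpha\pi$, becomes $\mathbf{0}$. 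Combining the two cases gives the claimed equivalence, and the independent case already described the nonzero image. The work is really just the subset-sum expansion; the only thing to keep an eye on is that the coefficient arithmetic is done modulo $2$, so the involution pairing genuinely annihilates everything, and that the $z$-variables, being untouched, neither create nor destroy any monomial in the $y$-variables.
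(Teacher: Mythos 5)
Your proposal is correct and follows essentially the same route as the paper: expand the product $\prod_\ell(\vec{v}_{i_\ell j_\ell}+\vec{v}_0)$ into the sum of subset-sums $\vec{w}_S$, cancel in characteristic $2$ via a pairing induced by a dependency $\vec{w}_T=\vec{v}_0$, and in the independent case note that distinct subsets give distinct span elements, so the image is the sum of all $2^m$ vectors in the span. The only cosmetic difference is that you pair all subsets of $\{1,\ldots,m\}$ at once via the involution $S\mapsto S\triangle T$, whereas the paper pairs $S$ with $T-S$ inside $T$ and then factors the full product through $\prod_{\vec{v}\in T}(\vec{v}+\vec{v}_0)=\mathbf{0}$; both yield the same conclusion.
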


\begin{proof}
The analysis below gives a proof for this lemma.
Suppose $V$  is a set of  linearly dependent vectors
in $Z^k_2$. Then, there exists a nonempty subset $T \subseteq V$ such that $\prod_{\vec{v}\in T} \vec{v}= \vec{v}_0$.
For any $S\subseteq T$, since
$\prod_{\vec{v}\in T} \vec{v} =
(\prod_{\vec{v}\in S} \vec{v}~) \cdot (\prod_{\vec{v}\in T-S} \vec{v}~)$, we have
$\prod_{\vec{v}\in S} \vec{v} = \prod_{\vec{v}\in T-S} \vec{v}$.
Thereby, we have
\begin{eqnarray}
\prod_{\vec{v}\in T }(\vec{v} + \vec{v}_0)
&=& \sum_{S\subseteq T} \prod_{\vec{v}\in S}\vec{v} = {\bf 0}, \nonumber
\end{eqnarray}
since every $\prod_{\vec{v}\in S}\vec{v}$ is paired by the same
$\prod_{\vec{v}\in T-S}\vec{v}$ in the sum above and the addition of the pair
is annihilated because ${\cal F} $ has characteristic $2$. Therefore,
\begin{eqnarray}
\prod_{\vec{v}\in V}(\vec{v} + \vec{v}_0)
&=& \left(~\prod_{\vec{v}\in T} (\vec{v}+\vec{v}_0)\right) \cdot \left(~\prod_{\vec{v}\in V-T}(\vec{v}+\vec{v}_0)\right)  \nonumber \\
& =&  0 \cdot \left(~\prod_{\vec{v}\in V-T}(\vec{v}+\vec{v}_0)\right) = {\bf 0}. \nonumber
\end{eqnarray}

Now consider that vectors in $V$ are linearly independent.
For any two distinct subsets $S, T\subseteq V$, we must have
$\prod_{\vec{v}\in T} \vec{v} \not=\prod_{\vec{v}\in S} \vec{v}$, because otherwise
vectors in $S \cup T - (S \cap T)$ are linearly dependent, implying that vectors in $V$ are linearly dependent.
Therefore,
\begin{eqnarray}
\prod_{\vec{v}\in V}(\vec{v} + \vec{v}_0)
&=& \sum_{T\subseteq V} \prod_{\vec{v}\in T} \vec{v} \nonumber
\end{eqnarray}
is the sum of all the $2^{|V|}$ distinct vectors spanned by $V$.
\end{proof}

\begin{theorem}\label{thm-rtm}
Let $q>2$ be any fixed integer and $F(x_1,x_2,\ldots,x_n)$ be an $n$-variate polynomial
represented by a tree-like circuit ${\cal C}$ of
size $s(n)$. Then the randomized algorithm RandQMT can decide whether $F$ has a $q$-monomial
of degree $k$ in its sum-product expansion in time $O^*(7.15^k s^2(n))$.
\end{theorem}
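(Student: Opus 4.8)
The plan is to analyse one pass of the loop in Step~2 and then amplify over its $(1/0.28)^k$ repetitions. I begin with \textbf{soundness}: under the standing degree assumption, ``$F$ has no $q$-monomial of degree $k$'' is the same as ``$F$ has no $q$-monomial at all,'' so the second part of Lemma~\ref{rtm-lem2} gives that $G$ has no monomial whose $y$-part is multilinear; hence every monomial of $G$ is annihilated once the variables $y_{ij}$ are replaced by $\vec v_{ij}+\vec v_0$ in Step~2.2 (Lemma~\ref{rtm-lem3}), so $G'={\bf 0}$, every $f_j$ is identically zero, the Schwartz--Zippel tests in Step~2.4 always return zero, and the procedure outputs ``no'' with certainty. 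Thus there are no false positives.

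\textbf{Completeness.} Suppose $F$ has a $q$-monomial $\pi=x_{i_1}^{s_1}\cdots x_{i_t}^{s_t}$ of degree $k$; fix it and, by Lemma~\ref{rtm-lem1}, a monomial $\alpha\pi$ of $F'$ with $\alpha$ a multilinear $z$-monomial of degree $\le 2k-1$. By the injectivity clause of Lemma~\ref{rtm-lem1} the $z$-monomial $\alpha$ labels no other monomial of $F'$, so the ``$z$-slice'' of $G$ carrying the label $\alpha$ is exactly $\alpha\cdot r(\pi)=\alpha\prod_\ell r(x_{i_\ell}^{s_\ell})$. Now, in one pass: by the computation inside the proof of Lemma~\ref{rtm-lem2} (Lemmas~\ref{lem-0} and~\ref{lem3} applied to the $s_\ell\times s_\ell$ coefficient matrices), with probability at least $0.28^t\ge 0.28^k$ over the Step-2.1 choices the monomial $G$ contains the degree-$k$, $y$-multilinear monomial $\alpha\pi'$ with $\pi'=\pi'_1\cdots\pi'_t$ and coefficient $1$; conditioned on that, by Lemma~\ref{lem3} the $k$ group-algebra vectors chosen for the variables of $\pi'$ in Step~2.2 are linearly independent with probability at least $0.28$, so by Lemma~\ref{rtm-lem4} $\pi'$ collapses to $\sum_{\vec v\in Z_2^k}\vec v$ and $\alpha\pi'$ deposits the monomial $\alpha$ (with coefficient $1\in{\cal F}$) into \emph{every} $f_j$. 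No other monomial of $G$ can cancel this: a non-$y$-multilinear one dies by Lemma~\ref{rtm-lem3}; a $y$-multilinear one of $y$-degree $>k$ dies by Lemma~\ref{rtm-lem4} because more than $k$ vectors of $Z_2^k$ are linearly dependent; none has $y$-degree $<k$ by the standing assumption; and a degree-$k$, $y$-multilinear one carrying a label $\neq\alpha$ touches only some other $z$-monomial (Lemma~\ref{rtm-lem1}). The only delicate case, a degree-$k$, $y$-multilinear monomial carrying the same label $\alpha$, is treated in the last paragraph; granting that these do not erase the contribution, some $f_j$ is a nonzero polynomial of degree $\le 2k-1$ over ${\cal F}=\mbox{GF}(2^d)$ with $|{\cal F}|=2^d\ge 2(2k-1)$, which Step~2.4 detects with probability at least $1/2$. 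Hence a pass answers ``yes'' with probability at least a fixed positive multiple of $2^{-k}$, so over the $(1/0.28)^k$ independent passes the failure probability is at most $\exp(-\Omega((0.56)^{-k}))$, a constant below $1$ that tends to $0$; $O(1)$ outer repetitions then raise the success probability to any desired constant.

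\textbf{Running time.} The reconstruction of Subsection~\ref{CR} produces a tree-like circuit ${\cal C'}$ of size $O(ns(n))=O(s^2(n))$, and replacing each terminal node by a linear form in $q-1$ fresh $y$-variables (with $q$ a fixed constant) enlarges it only by a constant factor. In Step~2.3 the circuit is evaluated over the group algebra ${\cal F}[Z_2^k]$ with ${\cal F}=\mbox{GF}(2^d)$, $d=O(\log k)$: an element is stored as its $2^k$ coordinates in ${\cal F}$, a $+$ gate costs $O(2^k)$ field operations, a $\times$ gate is a convolution over the group $(Z_2)^k$ computable in $O(k2^k)$ field operations via the Walsh--Hadamard transform, and each field operation costs $\mbox{poly}(\log k)$ time; the random evaluation needed for Schwartz--Zippel is folded into this same circuit pass. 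Thus one pass costs $O^*(2^k s^2(n))$ and all $(1/0.28)^k$ passes together cost $O^*((2/0.28)^k s^2(n))=O^*(7.15^k s^2(n))$, since $2/0.28=7.142\ldots<7.15$.

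\textbf{The main obstacle.} Everything above is routine except the cancellation point flagged in the completeness step. Since $r(x_{i_\ell}^{s_\ell})$ is a product of $s_\ell$ random linear forms in $q-1$ variables, it has ${q-1 \choose s_\ell}$ distinct multilinear monomials, so the $\alpha$-slice of $G$ typically carries several degree-$k$, $y$-multilinear monomials, and in characteristic $2$ their traces in $G'$ could sum to zero. The remedy is to compute the net coefficient of $\alpha$ in $G'$ directly: using Lemmas~\ref{rtm-lem3} and~\ref{rtm-lem4} it equals $\alpha\,N_\alpha\sum_{\vec v\in Z_2^k}\vec v$, where $N_\alpha=\sum_{\vec M}\det C_{M_1}\cdots\det C_{M_t}$, the sum running over all tuples $\vec M=(M_1,\dots,M_t)$ of $s_\ell$-element subsets of $\{1,\dots,q-1\}$ whose $k$ associated Step-2.2 vectors are linearly independent, and $C_{M_\ell}$ is the $s_\ell\times s_\ell$ column-submatrix of the coefficient matrix of $r(x_{i_\ell}^{s_\ell})$ indexed by $M_\ell$. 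Expanding each determinant as a permanent (Lemma~\ref{lem-0}) exhibits $N_\alpha$, viewed as a polynomial in the Step-2.1 constants, as a sum of \emph{pairwise distinct} squarefree degree-$k$ monomials, one per admissible $\vec M$, so $N_\alpha$ is not the zero polynomial whenever an admissible $\vec M$ exists --- e.g.\ $\vec M=(\{1,\dots,s_\ell\})_\ell$, which is admissible with probability at least $0.28$ by Lemma~\ref{lem3}. The classical minimum-distance bound for Reed--Muller codes --- a nonzero multilinear polynomial over $Z_2$ of degree $k$ is nonzero at a uniform random point with probability at least $2^{-k}$ --- then yields the per-pass lower bound of order $2^{-k}$ used above. (The ``uniqueness'' wording in Lemma~\ref{rtm-lem2} suggests a shortcut here, but it is precisely this bookkeeping that has to be verified.)
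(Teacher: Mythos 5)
Your proposal is correct, and at the decisive step it takes a genuinely different route from the paper. The paper's proof of Theorem~\ref{thm-rtm} rules out cancellation among surviving monomials by asserting, via Lemma~\ref{rtm-lem2}, that the $z$-coefficient $\beta$ is unique for each surviving $y$-monomial $\phi$; but Lemma~\ref{rtm-lem1} only guarantees distinct $z$-labels for distinct product terms of $F'$, and for $q\ge 3$ a single term $\alpha\pi$ yields, after the replacement (\ref{trans-exp1}), up to $\prod_\ell {q-1 \choose s_\ell}$ distinct degree-$k$ multilinear $y$-monomials all carrying the same label $\alpha$, whose images under the group-algebra substitution could a priori cancel in characteristic $2$. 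You confront exactly this point: you compute the net coefficient $N_\alpha$ of the $\alpha$-slice as a sum, over admissible column-subset tuples $\vec{M}$, of products of determinants, observe via Lemma~\ref{lem-0} that as a polynomial in the Step-2.1 bits it is a sum of pairwise distinct squarefree degree-$k$ monomials with coefficient $1$ (your phrase ``one per admissible $\vec{M}$'' undercounts --- each $\vec{M}$ contributes $\prod_\ell s_\ell!$ of them --- but only distinctness and the unit coefficients matter, so the slip is harmless), hence $N_\alpha$ is a nonzero polynomial whenever the canonical tuple is admissible (probability at least $0.28$ by Lemma~\ref{lem3}), and then invoke the $\mbox{GF}(2)$ Reed--Muller minimum-distance bound to get a per-pass detection probability of order $2^{-k}$. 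This buys a rigorous treatment of the only delicate cancellation, which the paper's own argument glosses over, and as a by-product a stronger guarantee: with per-pass success $\Omega(2^{-k})$, the stated $(\frac{1}{0.28})^k$ iterations give failure probability $\exp(-\Omega(1.78^{k}))$ (indeed your analysis would even support reducing the iteration count to $O^*(2^k)$, i.e.\ an $O^*(4^k s^2(n))$ bound), whereas the paper's route, taking the per-slice uniqueness at face value, gets per-pass probability $\Omega(0.28^k)$ and only the constant success probability $\frac{1}{8}(1-\frac{1}{e})$. Your soundness argument and the running-time accounting (evaluation over ${\cal F}[Z^k_2]$ with a Walsh--Hadamard style transform, $O^*(2^k s^2(n))$ per pass, $2\times\frac{1}{0.28}<7.15$) coincide with the paper's.
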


For applications, we often require that the size of a given circuit is a polynomial in $n$.
in such cases, the upper bound in the theorem becomes $O^*(7.15^k)$.

\begin{proof}
From the introduction of the new $z$-variables to the circuit ${\cal C'}$,
it is easy to see that every monomial in $F'$ has the format $\alpha \pi$, where $\pi$ is a product of
$x$-variables and $\alpha$ is a product of $z$-variables. Since only $x$-variables are replaced by respective linear sums of
new $y$-variables as specified in expression (\ref{rtm-exp1}) (or expression (\ref{trans-exp1})), monomials in $G$ have the format
$\beta \phi$, where $\phi$ is a product of
$y$-variables and $\beta$ is a product of $z$-variables.

Suppose that $F$ has no $q$-monomials. By Lemma \ref{rtm-lem2}, $G$ has no monomials $\beta\phi$ such
that $\phi$ is a multilinear monomial of  $y$-variables and $\beta$ is a product of $z$-variables. In other words,
for every monomial $\beta\phi$ in $G$, the $y$-variable product $\phi$ must not be multilinear.
Moreover, by Lemma \ref{rtm-lem3}, replacing $y$-variables will make $\phi$ in every monomial $\beta\phi$ in $G$ to become zero. Hence,
the replacements will make  $G$ to become zero and so the algorithm RandQMT will return {\em "no"}.

Assume that $F$ has a $q$-monomial of degree $k$. By Lemma \ref{rtm-lem2},
with a probability at least $0.28^k$, $G$ has a monomial $\beta \phi$ such that
$\phi$ is a $y$-variable multilinear monomial of degree $k$ and
$\beta$ is a $z$-variable multilinear monomial of degree $\le 2k -1$.
It follows from Lemma \ref{lem3}, a list of uniform vectors from $Z^{k}_2$ will
be linearly independent with a probability at least $0.28$.
By Lemma \ref{rtm-lem4}, with a probability at least $0.28$, the multilinear monomial
$\phi$ will not be annihilated by the group algebra replacements at Steps 2.2 and 2.3.
Precisely, with a probability at least $0.28$, $\beta\phi$ will become
\begin{eqnarray}\label{rtm-exp7}
\lambda(\beta\phi) & = & \sum^{2^k}_{i=1} \beta \vec{v}_i,
\end{eqnarray}
where $\vec{v}_i$ are distinct vectors in $Z^k_2$.

Let ${\cal S}$ be the set of all those multilinear monomials $\beta\phi$ that survive
the group algebra replacements for $y$-variables in $G$. Then,
\begin{eqnarray}\label{rtm-exp8}
G' &= &G(z_1,\ldots,z_h,(\vec{v}_{11}+\vec{v}_0),\ldots,(\vec{v}_{1(q-1)}+\vec{v}_0),\ldots,\nonumber \\
&&~~~(\vec{v}_{n1}+\vec{v}_0),\ldots,(\vec{v}_{n(q-1)}+\vec{v}_0)) \nonumber \\
&=& \sum_{\beta\phi \in {\cal S}} \lambda(\beta\phi)  \nonumber \\
&=&  \sum_{\beta\phi\in {\cal S}} \left(\sum^{2^k}_{i=1}\beta \vec{v}_i\right) \nonumber \\
&= & \sum_{j=1}^{2^k} \left(\sum_{\beta\phi\in {\cal S}} \beta \right) \vec{v}_j
\end{eqnarray}
Let
\begin{eqnarray}
f_j(z_1,\ldots,z_h) & = & \sum_{\beta\phi\in {\cal S}} \beta. \nonumber
\end{eqnarray}
By Lemmas \ref{rtm-lem2} and \ref{rtm-lem3}, the degree of $\beta$ is at most $2k-1$.
Hence, the coefficient polynomial $f_j$ with respect to $\vec{v}_j$ in $G'$ after the algebra replacements
has degree $\le 2k-1$. Also, by Lemma \ref{rtm-lem2},
$\beta$ is unique with respect to every $\phi$ for each monomial $\beta\phi$ in $G$.
Thus, the possibility of a {\rm "zero-sum"} of coefficients from different
surviving monomials is completely avoided during the construction
of $f_j$. Therefore, conditioned on that ${\cal S}$ is not empty,
$F'$ must not be identical to zero, i.e., there exists at least one
$f_j$ that is not identical to zero. At Step 2.4, we use the
randomized algorithm by Schwartz-Zippel \cite{motwani95}
to test whether $f_j$ is identical to zero. It is known that
this testing can be done with a probability at least
$\frac{2k-1}{|{\cal F}|} = \frac{1}{2}$
in time polynomially in $s(n)$ and $\log_2 |{\cal F}| = 1+ \log_2 (2k-1)$. Since ${\cal S}$ is
not empty with a probability at least $0.28$, the success probability
of testing whether $G$ has a degree $k$ multilinear monomial is at
least $0.28 \times \frac{1}{2} > \frac{1}{8}$, under the condition that $G$ has at least one degree $k$ multilinear monomial.

Summarizing the above analysis,
when $F$ has a $q$-monomial of degree $k$
with a probability at least $0.28^k$, $G$ has a degree $k$ multilinear monomial $\phi$ of $y$-variables
in the format $\beta\phi$ with coefficient
$\beta$ that is a multilinear monomial of $z$-variables with degree $\le 2k-1$.
Thus, the probability that $G$ does not have any degree $k$  multilinear monomials of $y$-variables in the aforementioned format
$\beta\phi$ in its sum-product
expansion during any of the $\left(\frac{1}{0.28}\right)^k$ loop iterations is at most
$$
\left(1- (0.28)^k\right)^{(\frac{1}{0.28})^k} \le \frac{1}{e}.
$$
This implies that the probability that $G$ has at least one degree $k$ multilinear monomial during at least one of the
$\left(\frac{1}{0.28}\right)^k$ loop iterations is at least
$$
1- \frac{1}{e}.
$$
When $G$ has at least one degree $k$
multilinear monomial $\phi$ of $y$-variables in the format $\beta\phi$ as described above,
the group algebra replacement technique and the Schwartz-Zippel polynomial identity testing algorithm as analyzed above
will detect this with a probability at least $\frac{1}{8}$.
Therefore, when $F$ has one $q$-monomial in its sum-product expansion, with a probability at least
$$
\frac{1}{8} \times \left(1-\frac{1}{e}\right),
$$
algorithm RandQMT will detect this.

Finally, we address the issues about how to calculate $G'$ and the
time needed to do so. Naturally, every element in the group
algebra ${\cal F}[Z^k_2]$ can be represented by a vector in
$Z^{2^k}_2$. Adding two elements in ${\cal F}[Z^k_2]$ is equivalent to
adding the two corresponding vectors in $Z_2^{2^k}$, and the
latter can be done in $O(2^k)$ time via component-wise sum.
In addition, multiplying two elements in ${\cal F}[Z^k_2]$ is
equivalent to multiplying the two corresponding vectors in
$Z_2^{2^k}$, and the latter can be done in $O(k2^k\log_2|{\cal F}|)=O(k^22^k)$ with
the help of a similar Fast Fourier Transform style algorithm as in
Williams \cite{williams09}. Calculating $G'$ consists of $n * s^2(n)$
arithmetic operations of either adding or multiplying two elements
in ${\cal F}[Z^k_2]$ based on the circuit $C'$. Hence, the total
time needed is $O(n*s^2(n) k^2 2^k)=O^*(2^k s^2(n))$. At Step 2.4, we run the
Schwartz-Zippel algorithm on $G'$ to
simultaneously test whether there is one $f_j$ such that $f_j$ is
not identical to zero.
 The total time for the entire
algorithm is $O^*(2^k s^2(n) \cdot (\frac{1}{0.28})^k)$.
Since
$$
2\times \frac{1}{0.28} =2\times \frac{100}{28} <7.15,
$$
the time complexity of algorithm RandQMT is bounded by
$O^*(7.15^k s^2(n)).$
\end{proof}

\section{Concluding Remarks}
The group algebra approaches to testing multilinear monomials \cite{koutis08,williams09}
and $q$-monomials for prime $q$ \cite{chen11b,chen12b} rely on
the property that $Z_2$  and $Z_q$ are fields for primes $q > 2$.
These approaches are not applicable to the general case of testing $q$-monomials,
since $Z_q$ is no longer a field when $q$ is not prime.
In this paper, we have developed a variable replacement technique and a new way to reconstruct a given circuit.
When the two are combined,  they help us transform the $q$-monomial testing problem to
the multilinear monomial testing problem
in a randomized setting. We have also proved that the transformation has the desired
success probability to warrant its application to the design of our new algorithm.

It should be pointed out that the time complexity of the randomized
$q$-monomial testing algorithm obtained in \cite{chen11} runs in time
$O^*(q^k)$ for prime $q\ge 2$, when the size of the circuit is a polynomial in $n$.
Algorithm  $\mbox{RandQMT}$ runs in time $O^*(7.15^k)$,
hence it significantly improves the time complexity of the algorithm in \cite{chen11} for prime $q > 7$.

\section*{Acknowledgments}

Shenshi is supported by Dr. Bin Fu's NSF CAREER Award, 2009 April 1 to 2014 March 31.
Yaqing is supported by a UTPA Graduate Assistantship. Part of Quanhai's work was
done while he was visiting the Department of Computer Science at the University of
Texas-Pan American.



\end{document}